\newtheorem{theorem}{Theorem}
\newtheorem{lemma}{Lemma}
\newtheorem{definition}{Definition}
\theoremstyle{remark} 
\newtheorem{remark}{Remark}
\DeclareMathOperator*{\E}{\mathrm{E}}
\newcommand{\ceil}[1]{\left\lceil #1\right\rceil}
\newcommand{\accept}{$\mathtt{ACCEPT}$}
\newcommand{\reject}{$\mathtt{REJECT}$}
\newcommand{\wt}[1]{\ensuremath{\mathtt{wt} \! \left( #1 \right)}}
\newcommand{\bias}{\ensuremath{\mathsf{Bias}}}
\newcommand{\kernel}{\ensuremath{\mathsf{Kernel}}}
\newcommand{\SearchTreeSampler}{\ensuremath{\mathsf{STS}}}
\newcommand{\WeightGen}{\ensuremath{\mathsf{UniGen}}}
\newcommand{\oldbarbarik}{\ensuremath{\mathsf{Barbarik}}}
\newcommand{\fulcrumkernel}{\ensuremath{\mathsf{Barbarik2Kernel}}}
\newcommand{\barbarik}{\ensuremath{\mathsf{Barbarik2}}}
\newcommand{\QuickSampler}{\ensuremath{\mathsf{Quicksampler}}}
\newcommand{\wQuickSampler}{\ensuremath{\mathsf{wQuicksampler}}}
\newcommand{\wSearchTreeSampler}{\ensuremath{\mathsf{wSTS}}}
\newcommand{\wWeightGen}{\ensuremath{\mathsf{wUniGen}}}
\newcommand{\kernelfamily}{\ensuremath{\mathsf{KernelFamily}}}
\title{On Testing of Samplers \thanks{The accompanying tool, available open source, can be found at \href{https://github.com/meelgroup/barbarik}{https://github.com/meelgroup/barbarik}. The Appendix is available in the accompanying supplementary material.} \thanks{The authors decided to forgo the old convention of alphabetical ordering of authors in favor of a randomized ordering, denoted by \textcircled{r}. The publicly verifiable record of the randomization is available at \protect\url{https://www.aeaweb.org/journals/policies/random-author-order/search} with confirmation code: GH8VZdz4mQIh. For citation of the work, authors request that the citation guidelines by AEA for random author ordering be followed. }}
\author{%
  Kuldeep S. Meel$^1$ \quad \textcircled{r} \quad   Yash Pote $^1$ \quad \textcircled{r}  \quad
  Sourav Chakraborty$^2$ \\
  $^1$School of Computing, National University of Singapore\\
  $^2$Indian Statistical Institute, Kolkata
}
\begin{document}

\maketitle
\begin{abstract}
	
	Given a set of items $\mathcal{F}$ and a weight function $\mathtt{wt}: \mathcal{F} \mapsto (0,1)$, the problem of sampling seeks to sample an item proportional to its weight. Sampling is a fundamental problem in machine learning. The daunting computational complexity of sampling with formal guarantees leads designers to propose heuristics-based techniques for which no rigorous theoretical analysis exists to quantify the quality of generated distributions. 
	This poses a challenge in designing a testing methodology to test whether a sampler under test generates samples according to a given distribution.     Only recently, Chakraborty and Meel (2019) designed the first scalable verifier, called {\oldbarbarik}, for samplers in the special case when the weight function $\mathtt{wt}$ is constant, that is, when the sampler is supposed to sample uniformly from $\mathcal{F}$ . The techniques in {\oldbarbarik}, however,  fail to handle general weight functions. 
	
	The primary contribution of this paper is an affirmative answer to the above challenge: motivated by {\oldbarbarik}, but using different techniques and analysis, we design {\barbarik}, an algorithm to test whether the distribution generated by a sampler is $\varepsilon$-close or $\eta$-far from any target distribution. In contrast to black-box sampling techniques that require a number of samples proportional to $|\mathcal{F}|$ , {\barbarik} requires only $\tilde{O}(tilt(\mathtt{wt},\varphi)^2/\eta(\eta - 6\varepsilon)^3)$ samples, where the $tilt$ is the maximum ratio of weights of two satisfying assignments. {\barbarik} can handle any arbitrary weight function. We present a prototype implementation of {\barbarik} and use it to test three state-of-the-art samplers.
\end{abstract} 
\section{Introduction}
Motivated by the success of statistical techniques, automated decision-making systems are increasingly employed in critical domains such as medical~\cite{CK19}, aeronautics~\cite{MS18}, criminal sentencing~\cite{D19}, and military~\cite{AS17}. The potential long-term impact of the ensuing decisions has led to research in the correct-by-construction design of AI-based decision systems. There has been a call for the design of randomized and quantitative formal methods~\cite{SSS16} to verify the basic building blocks of the modern AI systems. In this work, we focus on one such core building block: {\em constrained sampling}.

Given a set of constraints $\varphi$ over a set of variables $X$ and a weight function $\mathtt{wt}$ over assignments to $X$, the problem of constrained sampling is to sample a satisfying assignment $\sigma$ of $\varphi$ with probability proportional to $\mathtt{wt}(\sigma)$.  Constrained sampling is a fundamental problem that encapsulates a wide range of sampling formulations~\cite{GSS07,EGSS13a,CMV13a,MTM14,CFMSV15}. For example, $\mathtt{wt}$ can be used to capture a given prior distribution often represented implicitly through probabilistic models, and $\varphi$ can be used to capture the evidence arising from the observed data, then the problem of constrained sampling models the problem of sampling from the resulting posterior distribution.

The problem of constrained sampling is computationally hard and has witnessed a sustained interest from theoreticians and practitioners, resulting in the proposal of several approximation techniques. Of these, Monte Carlo Markov Chain(MCMC)-based methods form the backbone of modern sampling techniques~\cite{ADDJ03,BGJM11}. The runtime of these techniques depends on the length of the random walk, and the Markov chains that require polynomial walks are called rapidly mixing Markov chains. Unfortunately, for most distributions of practical interest, it is infeasible to design rapidly mixing Markov chains~\cite{J98}, and the practical implementations of such techniques have to resort to the usage of heuristics that violate theoretical guarantees. The developers of such techniques, often and rightly so, strive to demonstrate their effectiveness via empirical behavior in practice~\cite{BG98}. 

The need for the usage of heuristics to achieve scalability is not restricted to just MCMC methods but is widely observed for other methods such as simulated annealing~\cite{KGV83}, variational methods~\cite{CDAD04}, and hashing-based techniques~\cite{CMV13a,EGSS13a,CFMSV14,MVC+16}. Consequently, a fundamental problem for the designers of sampling techniques is:
{\em how can one efficiently test whether a given technique samples from the desired distribution?}
Most of the existing approaches rely on the computations of statistical metrics such as variation distance and KL-divergence by drawing samples and perform hypothesis testing with a preset p-value. Sound computations of statistical metrics require a large number of samples that is proportional to the support of the posterior distribution \cite{batu, ValiantV11}, which is prohibitively large; it is not uncommon for the distribution support to be significantly larger than $2^{70}$. Consequently, the existing approaches tend to estimate the desired quantities using a fraction of the required samples, and such estimates are often without the required confidence. The usage of unsound metrics may lead to unsound conclusions, as demonstrated by a recent study where the usage of unsound metric would lead one to conclude that two samplers were indistinguishable (it is worth mentioning that the authors of the study clearly warn the reader about the unsoundness of the underlying metrics)~\cite{DLBS18}.

The researchers in the sub-field of property testing within theoretical computer science have analyzed the sample complexity of testing under different models of samplers and computation. The resulting frameworks have not witnessed widespread adoption to practice due to a lack of samplers that can precisely fit the models under which results are obtained. 
In recent work, Chakraborty and Meel~\cite{CM19}, building on the concepts developed in the condition sampling model (rf. \cite{clement}),  designed the first practical algorithmic procedure, called {\oldbarbarik}, that can rigorously test whether a given sampler samples from the uniform distribution using a constant number of samples, assuming that the given sampler is {\em subquery-consistent} (see Definition~\ref{def:nonadversarial}). Empirically, {\oldbarbarik} was shown to be able to distinguish samplers that were indistinguishable in prior studies based on unsound metrics. While {\oldbarbarik} made significant progress, it is marred by its ability to handle only the uniform distribution. Therefore, one wonders: {\em Can we design an algorithmic framework to test whether the distribution generated by a given sampler is close to a desired (but arbitrary) posterior distribution of interest?}

This paper's primary contribution is the first efficient algorithmic framework, \barbarik,  to test whether the distribution generated by a sampler is $\varepsilon$-close or $\eta$-far from the desired distribution specified by the set of constraints  $\varphi$ and a weight function $\mathtt{wt}$. In contrast to the statistical techniques that require an exponential or sub-exponential number of samples for samplers whose support can be represented by $n$ bits, the number of samples required by {\barbarik} depends on the {\em tilt} of the distribution, where {\em tilt} is defined as the maximum ratio of non-zero weights of two solutions of $\varphi$.  Like {\oldbarbarik}, the key technical idea of {\barbarik} sits at the intersection of {\em property testing} and {\em formal methods} and uses ideas from conditional sampling and employs chain formulas. However, the key algorithmic framework of {\barbarik} differs significantly from {\oldbarbarik}, and, as demonstrated, the proof of its correctness and sample complexity requires an entirely new set of technical arguments.

Given access to an ideal sampler $\mathcal{A}$, {\barbarik} accepts every sampler that is $\varepsilon$-close to $\mathcal{A}$ while its ability to reject a sampler that is $\eta$-far from $\mathcal{A}$ assumes that the sampler under test is \textit{subquery consistent}. 
Since {\barbarik} assumes access to an ideal sampler, one might wonder if a tester such as {\barbarik} is needed when we already have access to an ideal sampler. Since sampling is computationally intractable, it is almost always the case that an ideal sampler $\mathcal{A}$ is quite slow and one would prefer to use some other efficient sampler $\mathcal{G}$ instead of $\mathcal{A}$, if $\mathcal{G}$ can be certified to be close to $\mathcal{A}$. 





To demonstrate the practical efficiency of {\barbarik}, we developed a prototype implementation in Python and performed an experimental evaluation with several samplers. While our framework does not put a restriction on the representation of $\mathtt{wt}$, we perform empirical validation with weight distributions corresponding to log-linear models, a widely used class of distributions.  Our empirical evaluation shows that {\barbarik} returns {\accept} for the samplers with formal guarantees but returns {\reject} for other samplers that are without formal guarantees. Our ability to reject samplers provides evidence in support of our assumption of subquery consistency of samplers. We believe our formalization of testing of samplers and the design of the algorithmic procedure, {\barbarik}, contributes to the design of {\em randomized formal methods} for verified AI, a principle argued by Seshia et al~\cite{SSS16}. 

\section{Notations and Preliminaries}\label{sec:prelims}
A Boolean variable is denoted by a lowercase letter.  For a Boolean formula $\varphi$, the set of variables appearing in $\varphi$, called the \textit{support} of $\varphi$, is denoted by $Supp(\varphi)$. An assignment $\sigma \in \{0,1\}^{|Supp(\varphi)|}$ to the variables of $\varphi$ is a \textit{satisfying assignment} or \textit{witness}  if it makes $\varphi$  evaluate to $1$. We denote the set of all satisfying assignments of $\varphi$ as $R_{\varphi}$. For $S\subseteq Supp(\varphi)$, we use $\sigma_{\downarrow{S}}$ to indicate the projection of $\sigma$ over the set of variables in $S$. And we denote by $R_{\varphi_{\downarrow{S}}}$ the set $\{\sigma_{\downarrow{S}}\ |\ \sigma\in R_{\varphi}\}$.

  \begin{definition}[Weight Function]
 	For a set $S$ of Boolean variables, a weight function $\mathtt{wt}: \{0,1\}^{|S|} \rightarrow (0,1)$ maps each assignment to some weight.
 \end{definition}

\begin{definition}[Sampler]
A sampler $\mathcal{G}(\varphi, S, \mathtt{wt}, \tau)$ is a randomized algorithm that takes in a Boolean formula $\varphi$, a weight function $\mathtt{wt}$, a set $S\subseteq Supp(\varphi)$ and a positive integer $\tau$ and outputs $\tau$  independent samples from $R_{\varphi_{\downarrow {S}}}$.  For brevity of notation we will omit arguments $\varphi, S, \mathtt{wt}, \tau$, whenever may sometimes refer to a sampler as $\mathcal{G}(\varphi)$ or simply, $\mathcal{G}$.

For any $\sigma \in \{0,1\}^{|S|}$ the probability of the sampler $\mathcal{G}$ outputting $\sigma$ is denoted by $p_\mathcal{G}(\varphi, S, \sigma)$ (or $p_{\mathcal{G}}(\varphi, \sigma)$ when the set $S$ in question is clear from the context). 
\end{definition}

We use $D_{\mathcal{G}(\varphi, S)}$  to represent the distribution induced by $\mathcal{G}(\varphi,S)$ on $R_{\varphi_{\downarrow{S}}}$. When the set $S$ is understood from the context we will denote $D_{\mathcal{G}(\varphi, S)}$ by $D_{\mathcal{G}(\varphi)}$.

\begin{definition}[Ideal Sampler]\label{def:ideal_weighted}
For a weight function $\mathtt{wt}$, a sampler $\mathcal{A}(\varphi ,S, \tau)$ is called an ideal sampler w.r.t. weight function $\mathtt{wt}$ if for all $\sigma \in R_{\varphi_{\downarrow{S}}}$:
$	p_\mathcal{A}(\varphi,S, \mathtt{wt}, \sigma) = \frac{\wt{\sigma} }{\sum_{\sigma'\in R_{\varphi_{\downarrow{S}}}}\wt{\sigma'}}$.
In the rest of the paper, $\mathcal{A}(\cdot,\cdot,\cdot, \cdot)$  denotes the ideal sampler. 
When $\mathtt{wt}(\sigma) = \frac{1}{|R_{\varphi}|}$ then the ideal sampler is called a uniform sampler. 
\end{definition}


\begin{definition}[Tilt]\label{defn:tilt}
    For a Boolean formula $\varphi$ and weight function $\mathtt{wt}$, we define 
   $ tilt(\mathtt{wt},\varphi) = 
                 \underset{\sigma_1,\sigma_2 \in R_\varphi}{\max}\;\frac{\mathtt{wt}(\sigma_1)}{\mathtt{wt}( \sigma_2)}$.
\end{definition}
Our goal is to design a program that can test the quality of a sampler with respect to an ideal sampler.
We use two different notions of distance of the sampler from the ideal sampler.

\begin{definition}[$\varepsilon$-closeness and $\eta$-farness]\label{def:closeness_and_farness}
A sampler $\mathcal{G}$ is $\varepsilon$-multiplicative-close (or simply $\varepsilon$-close) to an ideal sampler $\mathcal{A}$, if for all $\varphi$ and  all  $\sigma \in R_{\varphi}$, we have
$ (1-\varepsilon)p_\mathcal{A}(\varphi,\sigma)\leq p_\mathcal{G}(\varphi,\sigma) \leq (1+\varepsilon)p_\mathcal{A}(\varphi,\sigma).$
For a formula $\varphi$, a sampler $\mathcal{G}(\varphi)$ is $\eta$-$\ell_1$-far (or simply $\eta$-far) from the ideal sampler $\mathcal{A}(\varphi)$, if
  $ \sum_{\sigma \in R_{\varphi}} |p_\mathcal{A}(\varphi, \sigma) - p_\mathcal{G}( \varphi, \sigma) | \geq \eta$
\end{definition}

It is worth emphasising that the  asymmetry in the  notions of $\varepsilon$-close and $\eta$-far stems from the availability of practical samplers. Since the available off-the-shelf solvers with theoretical guarantees provide the guarantee of $\varepsilon$-closeness, we are  interested in accepting a sampler that is $\varepsilon$-close~\cite{GSS07,EGSS13a,CMV13a,CFMSV15}. On the other hand, we would like to be more forgiving to the samplers without guarantees and would like to reject only if they are $\eta$-far in $\ell_1$ distance, a notion more relaxed than multiplicative closeness. 

\begin{definition}[$(\varepsilon,\eta, \delta)$-tester for samplers]
A $(\varepsilon, \eta, \delta)$-tester for samplers is a randomized algorithm that takes a sampler $\mathcal{G}$, an ideal sampler $\mathcal{A}$, a tolerance parameter $\varepsilon$, an intolerance parameter $\eta$, a guarantee parameter $\delta$ and a CNF formula $\varphi$ such that (1) If $\mathcal{G}(\varphi)$ is  $\varepsilon$-close to $\mathcal{A}(\varphi)$,  then the tester returns {\accept} with probability at least $(1-\delta)$, and (2) If $\mathcal{G}(\varphi)$ is $\eta$-far from $\mathcal{A}(\varphi)$ then the tester returns {\reject} with probability at least $(1-\delta)$.

\end{definition}

\subsection{Chain Formula}

A crucial component in our algorithm is the chain formula. Chain formulas, introduced in \cite{CFMV15}, are a special class of Boolean formulas. 
Given a positive integer $k$ and $m$, chain formulas provide an efficient construction of a Boolean formula $\psi_{k,m}$ with exactly $k$ satisfying assignments with $\ceil{log(k)} \leq m$ variables. We employ chain formulas for inverse transform sampling and in the subroutine {\fulcrumkernel}.

\begin{definition}~\cite{CFMV15}
	Let $c_1 c_2 \cdots c_m$ be the $m$-bit binary representation of $k$,
	where $c_m$ is the least significant bit.
	We then construct a chain formula $\varphi_{k,m}(\cdot)$
	on $m$   variables $a_1, \ldots a_m$ as follows.  For every $j$
	in $\{1, \ldots m-1\}$, let $C_j$ be the connector ``$\vee$'' if
	$c_j=1$, and the connector ``$\wedge$'' if $c_j=0$.  Define
	\[
	\varphi_{k,m}(a_1,\cdots a_m) = a_1 \,C_1\, (a_2 \,C_2 (\cdots (a_{m-1}
	\,C_{m-1}\, a_m)\cdots))
	\]
\end{definition}

For example, consider $k = 11$ and $m = 4$.  The binary
representation of $11$ using $4$ bits is $1011$. Therefore,
$\varphi_{5,4}(a_1, a_2, a_3, a_4)= a_1\vee(a_2\wedge(a_3\vee a_4))$.

\begin{lemma}~\cite{CFMV15}
	Let $m > 0$ be a natural number, $k < 2^m$ , and $\varphi_{k,m}$
	as defined above. Then
	$|\varphi_{k,m}|$ is linear in $m$ and
	$\varphi_{k,m}$ has exactly $k$ satisfying assignments.
	Every chain formula $\psi$ on $n$ variables is equivalent to a CNF
	formula $\psi^{CNF}$ having at most
	$n$ clauses.  In addition, $|\psi^{CNF}|$  is
	in $O(n^2)$.
\end{lemma}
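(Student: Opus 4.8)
The plan is to establish all four assertions by structural induction on the length of the chain, exploiting the recursive shape $\varphi_{k,m}(a_1,\dots,a_m) = a_1\, C_1\, \varphi_{k',m-1}(a_2,\dots,a_m)$, where $c_1 c_2\cdots c_m$ is the $m$-bit binary expansion of $k$ (we take $k\ge 1$, as is implicit in requiring $\lceil\log k\rceil\le m$), $C_1$ is ``$\vee$'' when $c_1=1$ and ``$\wedge$'' when $c_1=0$, and $k'$ is the integer whose $(m-1)$-bit expansion is $c_2\cdots c_m$, so $0\le k'<2^{m-1}$ and $k = c_1\cdot 2^{m-1} + k'$. The size claim is immediate: as written, $\varphi_{k,m}$ contains $m$ variable occurrences, $m-1$ binary connectives, and $O(m)$ parentheses, hence $|\varphi_{k,m}| = O(m)$.

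For the model count I would induct on $m$. In the base case $m=1$ we have $\varphi_{k,1}=a_1$, which has exactly one satisfying assignment, matching the only admissible value $k=1$. For the inductive step, case on $C_1$. If $c_1=1$, then $\varphi_{k,m}=a_1\vee\varphi_{k',m-1}$; grouping the assignments by the value of $a_1$ gives $2^{m-1}$ assignments with $a_1=1$, plus exactly $k'$ assignments with $a_1=0$ by the induction hypothesis applied to $\varphi_{k',m-1}$, for a total of $2^{m-1}+k'=k$. If $c_1=0$, then $\varphi_{k,m}=a_1\wedge\varphi_{k',m-1}$; every model sets $a_1=1$, and there are exactly $k'=k$ such models by the induction hypothesis.

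For the CNF form I would induct on the number of variables $n$, strengthening the claim to: $\varphi$ is logically equivalent to a CNF each of whose clauses is a non-empty disjunction of distinct positive literals, with at most $1+|\{\,j : C_j = \wedge\,\}|$ clauses. The base case is the single unit clause $(a_n)$. For the step of prepending ``$a_1\,C_1$'': if $C_1=\wedge$, adjoin the unit clause $(a_1)$, which raises the clause count by one and preserves equivalence since $a_1\wedge\bigwedge_i D_i$ is already in CNF; if $C_1=\vee$, use the distributive identity $a_1\vee\bigwedge_i D_i \equiv \bigwedge_i(a_1\vee D_i)$, i.e.\ insert the literal $a_1$ into every clause, which preserves both the clause count and equivalence (and keeps the literals distinct, as $a_1$ is fresh). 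Since $|\{\,j : C_j = \wedge\,\}|\le n-1$, the resulting CNF has at most $n$ clauses; each clause ranges over a subset of $\{a_1,\dots,a_n\}$ and so has at most $n$ literals, giving $|\psi^{CNF}| = O(n^2)$.

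I do not expect a deep obstacle: the work is entirely bookkeeping. The two points needing care are (i) checking that the lower-order bits of $k$ genuinely index the sub-chain-formula $\varphi_{k',m-1}$, so the recursion bottoms out correctly even when $k$ has leading zeros (where several $\wedge$-steps are peeled off without changing the count), and (ii) confirming that the two CNF rewrite steps are genuine logical equivalences rather than mere equisatisfiability, which is what licenses iterating them inside the induction.
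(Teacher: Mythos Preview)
The paper does not prove this lemma; it is quoted verbatim from \cite{CFMV15} and stated without argument. Your inductive proof is the natural one and is essentially what the original reference does: peel off the outermost connective, count models in each branch, and for the CNF form either adjoin a unit clause (when the connective is $\wedge$) or distribute the new literal into every existing clause (when it is $\vee$).

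One point worth tightening. As transcribed in this paper, the definition of $\varphi_{k,m}$ uses only the bits $c_1,\dots,c_{m-1}$ to select connectives; the least significant bit $c_m$ never enters the formula. Consequently $\varphi_{k,m}$ and $\varphi_{k+1,m}$ coincide whenever $k$ is even, so the model-count assertion can only hold for odd $k$ --- which is exactly how the paper uses it (note the remark ``We choose the value of $k$ such that it is odd''). Your induction silently depends on $k'\ge 1$ at every unpeeling step in order to invoke the hypothesis on $\varphi_{k',m-1}$; this is guaranteed precisely when $c_m=1$, since then $k'$ shares the last bit of $k$ and is therefore odd and nonzero at every level. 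Your caveat (i) flags leading zeros but not this trailing-zero issue; stating the oddness hypothesis explicitly closes the gap and matches the intended lemma.
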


\subsection{Barbarik2Kernel and the Subquery Consistency Assumption}

{\fulcrumkernel} is a crucial subroutine that we use in our algorithm to help us draw  {\em conditional samples} from $R_{\varphi_{\downarrow {S}}}$. This is similar to the subroutine {\kernel} used by the {\oldbarbarik} in \cite{CM19}. We will now define a collection of functions {\kernelfamily}. 

\begin{definition}\label{def:kernel}
	{\kernelfamily} is family of functions that take a Boolean formula $\varphi$, a set of variables $S \subseteq Supp(\varphi)$, and two assignments $\sigma_1, \sigma_2 \in R_{\varphi \downarrow S}$,  and return $\hat{\varphi}$ such that $R_{\hat{\varphi}\downarrow{S}} = \{\sigma_1, \sigma_2 \}$. 
\end{definition}

\cite{CM19}  introduced the notion of {\em non-adversarial assumption}, which was crucial in their analysis. We rename the notion of {\em subquery consistency} to better capture its intended properties, defined below.

\begin{definition}\label{def:nonadversarial} 
	Let ${\fulcrumkernel}\in{ \kernelfamily}$. A sampler $\mathcal{G}$ is subquery consistent w.r.t.  a particular {\fulcrumkernel} for $\varphi$ if for all $S \subseteq Supp(\varphi)$, $  \sigma_1, \sigma_2 \in  R_{\varphi_{\downarrow {S}}}$, let   $\hat{\varphi} \gets \fulcrumkernel(\varphi, S, \sigma_1, \sigma_2)$ then
	the output of $\mathcal{G}(\hat{\varphi}, \mathtt{wt}, S, \tau)$ is $\tau$ independent samples from the conditional distribution $\mathcal{D}_{\mathcal{G}(\varphi) |{T}}$, where $T=\{\sigma_1, \sigma_2\}$.
\end{definition}

Similar to the usage of {\em non-adversarial assumption} in the correctness analysis of {\oldbarbarik}~\cite{CM19}, the notion of {\em subquery consistency} would play a crucial role in our analysis.  Since each subquery can be viewed as conditioning and  given that conditioning is a fundamental operation, one would expect that off the shelf samplers would be subquery consistent. At the same time, in contrast to practical applications, the set $T$ is arbitrarily chosen, and therefore, it is possible that certain samplers do not satisfy the property of subquery consistency. It is, however, not known how to test whether a sampler is  subquery consistent w.r.t  a particular {\fulcrumkernel}. 
While our empirical evaluation provides weak evidence to our claim that off the shelf samplers are subquery consistent, we believe  checking whether a sampler is subquery consistent is an interesting and important problem for future work.

\section{Related Work}\label{sec:related}
Distribution testing involves testing whether an unknown probability distribution is identical or close to  a  given distribution. This problem has been studied extensively in the property testing literature~\cite{conditional1, CanonneRS15,ValiantV11,Valiant08} . The sample space is exponential, and for many fundamental distributions, including uniform, it is prohibitively expensive in terms of samples to verify closeness. This led to the development of the conditional sampling model \cite{conditional1, CanonneRS15}, which can provide sub-linear or even \textit{constant} sample complexities for the testing of the above-given properties\cite{clement, KamathT19,  BhattacharyyaC18, kane17, Jayaram20}. A detailed discussion on prior work in property testing and their  relationship to {\barbarik} is given in Appendix \ref{sec:propertytestingoverview}. 



The first practically efficient algorithm for  verification of samplers with a formal proof of correctness  was presented by Chakraborty and Meel in form of {\oldbarbarik}~\cite{CM19}. 
The central idea of {\oldbarbarik}, building on the work of Chakraborty et al.~\cite{conditional1} and Canonne et al.~\cite{CanonneRS15}, was that if one can have conditional samples from the distribution, then one can test properties of the distribution using only a constant number of conditional samples.

{\oldbarbarik} constructs a two-element set $T \subset R_{\varphi}$, with one element drawn according to the distribution $D_{\mathcal{G}(\varphi)}$ and one element drawn uniformly at random from the set $R_{\varphi}$. Using a subroutine {\kernel} Chakraborty et al.~argued that one can draw samples from the conditional distribution $D_{\mathcal{G}(\varphi)\mid T}$.  Their sample complexity was $\Tilde{O}(1/(\eta -2\varepsilon)^4)$.  They proved that if a sampler $\mathcal{G}$ is $\varepsilon$-close to a uniform sampler then {\oldbarbarik} will accept with probability at least $(1-\delta)$, while if $\mathcal{G}(\varphi)$ is $\eta$-far from the uniform sampler and if $\mathcal{G}$ is subquery consistent w.r.t {\kernel} for $\varphi$, then {\oldbarbarik} rejects with probability at least $(1-\delta)$. Their underlying assumption was that many samplers that are in use would in fact be {\em subquery consistent} and the success of {\oldbarbarik} in rejecting several samplers provides evidence in support of the aforementioned assumption. 
They used {\oldbarbarik} to test the correctness of samplers like {\SearchTreeSampler}, {\QuickSampler}, and {\WeightGen}. 

Note that {\oldbarbarik} can only distinguish a uniform sampler from a far-from uniform sampler, and the techniques used cannot be generalized easily to the case where the ideal sampler is not necessarily uniform. While {\barbarik}, that we present in this paper, does borrow several techniques from {\oldbarbarik}, including drawing inspiration from the concept of conditional sampling for their design; {\barbarik} is very different from {\oldbarbarik} both in terms of the algorithmic design and its implementation.  

\section{An overview of the {\barbarik} Algorithm}\label{sec:barbarik}
In this section, we present the algorithmic framework of {\barbarik}, the pseudocode, presented as Algorithm~\ref{algo:theory_new}, and then the theoretical justification for the algorithm.
{\barbarik} takes as input a blackbox sampler $\mathcal{G}$, a Boolean formula $\varphi$ with the associated weight function $\mathtt{wt}$ and three parameters $(\varepsilon,\eta,\delta)$. It also has access to an ideal sampler $\mathcal{A}$. 
{\barbarik} is an $(\varepsilon,\eta,\delta)$-tester for samplers. Also if {\barbarik} returns {\reject} (that is, when $\mathcal{G}$ is $\eta$-far from $\mathcal{A}$), it provides as witness a new formula $\hat\varphi$ which is similar to $\varphi$, except that $\hat\varphi$ has only two assignments to the variables in $S$ (namely $\sigma_1$ and $\sigma_2$) that can be extended to satisfying assignments of $\hat{\varphi}$ and the relative probability masses of $\sigma_1$ and $\sigma_2$ in $\mathcal{D}_{\mathcal{G}}$ are significantly different from that in $\mathcal{D}_{\mathcal{A}}$.


The core idea of {\barbarik} is that for verifying the quality of the sampler ${\mathcal{G}(\varphi)}$, we can proceed in two stages. In the first stage, if the sampler is far from the ideal sampler $\mathcal{A}$, we hope to find a witness (in the form of two satisfying assignments) for farness with good probability. This can be guaranteed by drawing one sample each from $\mathcal{D}_{\mathcal{G}(\varphi)}$ and $\mathcal{D}_{\mathcal{A}(\varphi)}$.
In the second stage, we confirm whether the witness is indeed far. That is, if the witness is the ($\sigma_1, \sigma_2$) pair, we check that the probability of $\sigma_1$ and $\sigma_2$  in $\mathcal{D}_{\mathcal{G}(\varphi)}$ and $\mathcal{D}_{\mathcal{A}(\varphi)}$ are similar or not.

Here {\barbarik} differs from {\oldbarbarik} in a significant way. {\oldbarbarik} employs a bucketing strategy. But, {\barbarik} chooses a simpler yet equally effective method to check the similarity between $\sigma_1$ and $\sigma_2$.
This is also the most difficult stage of the tester as one may have to draw a exponential number of samples to confirm the similarity. We manage this by drawing samples from the conditional distribution $\mathcal{D}_{\mathcal{G}(\varphi)\mid\{\sigma_1, \sigma_2\}}$ instead of $\mathcal{D}_{\mathcal{G}(\varphi)}$. Since $\mathcal{D}_{\mathcal{G}(\varphi)\mid\{\sigma_1, \sigma_2\}}$ is supported on a set of size only two estimating the distance of $\mathcal{D}_{\mathcal{G}(\varphi)\mid\{\sigma_1, \sigma_2\}}$ from $\mathcal{D}_{\mathcal{A}(\varphi)\mid\{\sigma_1, \sigma_2\}}$ can be done with constant number of samples.

Now since we do not have direct access to the distribution $\mathcal{D}_{\mathcal{G}(\varphi)\mid\{\sigma_1, \sigma_2\}}$ we circumvent the problem by drawing samples from a new distribution  $\mathcal{D}_{\mathcal{G}(\hat{\varphi})}$ where $\hat{\varphi}$ is obtained from $\varphi$ and has similar structure as $\varphi$ (with $Supp(\varphi)\subseteq Supp(\hat{\varphi})$) and there are only two assignments (namely $\sigma_1$ and $\sigma_2$) to the variables in $Supp(\varphi)$ that can be extended to the satisfying assignments of $\hat{\varphi}$. The subroutine \fulcrumkernel~helps us simulate the drawing of samples from $\mathcal{D}_{\mathcal{G}(\varphi)\mid\{\sigma_1, \sigma_2\}}$ by drawing samples from  $\mathcal{D}_{\mathcal{G}(\hat{\varphi})}$. The subroutine {\bias}~helps to estimate the distance of
$\mathcal{D}_{\mathcal{G}(\hat{\varphi})}$ from  $\mathcal{D}_{\mathcal{A}(\hat{\varphi})}$.

Finally, we repeat the whole process for a certain number of rounds, and we argue that if the sampler is indeed far then, with high probability, in at least one round, we will find a witness of farness and confirm that the witness is indeed far. On the other hand, if the sampler is close to ideal, then there does not exist any such witness of farness.

{\barbarik} accesses two subroutines, {\bias} and {\fulcrumkernel}:
{\bias}$(\hat{\sigma},\Gamma,S)$ takes as input an assignment $\hat\sigma$, a list $\Gamma$ of assignments and a sampling set $S$. It returns the fraction of assignments of $\Gamma$ whose projections on $S$ is equal to $\hat{\sigma}$. 
{\fulcrumkernel}$(\varphi, \sigma_1, \sigma_2)$ is a  {\fulcrumkernel} subroutine (Definition \ref{def:kernel}).
Its aim is to create a $\hat\varphi$ such the behaviour of the sampler on $\hat\varphi$ is similar to it's behaviour on $\varphi$, i.e. $\mathcal{D}_{\mathcal{G}(\varphi)\mid\{\sigma_1, \sigma_2\}} \approx \mathcal{D}_{\mathcal{G}(\hat\varphi)}$.


In {\barbarik}, in the for loop (in lines \ref{line:barloop2}$-$\ref{line:barreject}), in
each round, the algorithm draws one sample $\sigma_1$ according to the
distribution $\mathcal{D}_{\mathcal{G}(\varphi)}$ and one sample $\sigma_2$
according to the ideal distribution on $R_{\varphi}$ (line \ref{line:one_sample}). 
In the case that $\sigma_1 = \sigma_2$ it moves the to next iteration (in line \ref{line:ifequal1}-\ref{line:ifequal2}).
In line \ref{line:barkernel}, the
subroutine {\fulcrumkernel} uses $\varphi$, the two samples $\sigma_1$ and $\sigma_2$, to output a
new formula $\hat{\varphi}$ such that
$Supp(\varphi)\subseteq\ Supp(\hat{\varphi})$.
On line \ref{line:barsample3}, {\barbarik} draws a list, $\Gamma_3$, of $N$ samples according to the
distribution $\mathcal{D}_{\mathcal{G}(\hat{\varphi})}$. {\fulcrumkernel} ensures that
for all $\sigma\in \Gamma_3$, $\sigma_{\downarrow S}$ is either $\sigma_1$ or
$\sigma_2$. In line \ref{line:barbias1} {\barbarik} uses {\bias} to compute the fraction of samples that are equal to $\sigma_1$ (on the variable set $S$), and if the fraction is greater than the threshold then {\barbarik} returns {\reject} (in line \ref{line:barreject}).


\begin{figure}
	\begin{minipage}[t]{0.50\textwidth}
		\begin{algorithm}[H]
			\footnotesize
			\caption{{\barbarik}($\mathcal{G},\mathcal{A}, \varepsilon,\eta,\delta,\varphi,S,\mathtt{wt}$)}
			\label{algo:theory_new}
			\begin{algorithmic}[1]
				\State $t \gets  ln(1/\delta) ln\left(\frac{10}{10-\eta(\eta-6\varepsilon)}\right)^{-1} $ \label{line:trials}
				\State $ n \gets  8  ln\left(t/\delta\right)$
				\State $lo = (1+\varepsilon)/(1-\varepsilon)$
				\State $hi = 1 + (\eta+6\varepsilon)/4$
				\State $\Gamma_1 \leftarrow \mathcal{G}(\varphi, S, t)$;\label{line:barsample1}
				\State $\Gamma_2 \gets \mathcal{A}(\varphi, S, t)$;\label{line:barsample2}
				\For{$i = 1$ to $t$} \label{line:barloop2}
				\State $\sigma_1 \gets \Gamma_1[i]$; $\sigma_2 \gets \Gamma_2[i];$\label{line:one_sample}
				\If{$\sigma_1 = \sigma_2$} \label{line:ifequal1}
				\State $\mathbf{continue}$\label{line:ifequal2}
				\EndIf
				\State $\alpha \gets \mathtt{wt}(\sigma_1)/\mathtt{wt}(\sigma_2)$\label{line:alpha}
				\State $L \gets \left(\alpha \cdot lo\right)/ \left(1+\alpha \cdot lo\right)$
				\State $H \gets \left(\alpha \cdot hi\right)/\left(1+\alpha \cdot hi\right)$
				\State $T = (H + L)/2$
				\State $N \gets n \cdot  H/(H - L)^2$ \label{line:ncalc}
				\State $\hat{\varphi} \leftarrow \fulcrumkernel(\varphi, \sigma_1, \sigma_2)$ \label{line:barkernel}
				\State $\Gamma_3 \gets \mathcal{G}(\hat{\varphi}, S, N)$  \label{line:barsample3}        
				\State $Bias\gets \bias(\sigma_1, \Gamma_3, S) \label{line:barbias1}$
				\If{$Bias > T$} \label{line:compare}
				\State \Return {\reject}  \label{line:barreject}
				\EndIf
				\EndFor
				\State \Return {\accept} \label{line:baraccept}
			\end{algorithmic}
		\end{algorithm}
	\end{minipage}
	\hfill
	\begin{minipage}[t]{0.48\textwidth}
		\begin{algorithm}[H]
			\caption{$\fulcrumkernel (\varphi, \sigma_1, \sigma_2)$}\label{alg:kernel}
			\begin{algorithmic}[1]
				\State $m \gets 12, k \gets 2^m-1$
				\State $\mathbf{Lits_1} \gets (\sigma_1 \setminus \sigma_2)$ \label{line:kernel-lits-one}
				\State $\mathbf{Lits_2} \gets (\sigma_2 \setminus \sigma_1)$ \label{line:kernel-lits-two}
				\State $\mathbf{V} \gets  NewVars(\varphi, m); $
				\State $\hat{\varphi} \gets \varphi \wedge (\sigma_1 \vee \sigma_2)$ \label{line:kernel-construct-begin}
				\State{$l \sim \mathbf{Lits_1} \cup \mathbf{Lits_2}$} \label{line:kernel-loop-begin}
				\State $\hat{\varphi} \gets \hat{\varphi} \wedge (\lnot l \rightarrow \psi_{k,m}
				(\mathbf{V}))$
				\State $\hat{\varphi} \gets \hat{\varphi} \wedge (l \rightarrow \psi_{k,m}
				(\mathbf{V}))$            \label{line:kernel-loop-end}
				\State \Return $\hat{\varphi}$
			\end{algorithmic}
		\end{algorithm}
		
		\begin{algorithm}[H]
			\caption{\bias($\hat{\sigma}$, $\Gamma$, $S$)}\label{alg:bias}
			\begin{algorithmic}[1]
				\State $count$ = 0
				\For {$\sigma \in \Gamma$}
				\If {$\sigma_{\downarrow S} = \hat{\sigma}$}
				\State count $\gets$ count +1
				\EndIf
				\EndFor
				\State \Return $\frac{count}{|\Gamma|}$        
			\end{algorithmic}
		\end{algorithm}
	\end{minipage}
\end{figure}

Algorithm~\ref{alg:kernel} presents the pseudocode of subroutine {\fulcrumkernel}. As stated above, {\fulcrumkernel} takes in a Boolean formula $\varphi$, a set $S\subseteq Supp(\varphi)$ and two partial assignments $\sigma_1, \sigma_2 \in R_{\varphi \downarrow S}$ .  Since the set $S$ is implicit from $\sigma_1$ and $\sigma_2$ it may not be explicitly given as an  input. {\fulcrumkernel} assumes access to subroutine $NewVars$ which takes in two parameters, a formula $\varphi$ and a number $m$, and returns a set  of $m$ fresh variables that do not appear in $\varphi$. {\fulcrumkernel} first constructs two sets of literals, denoted by $\mathbf{Lits_1}$ (resp. $\mathbf{Lits_2}$), which appear in $\sigma_1$ (resp. $\sigma_2$) but not $\sigma_2$ (resp. $\sigma_1$).
The algorithm then constructs the formula $\hat\varphi$. First it generates $\varphi \wedge (\sigma_1 \vee \sigma_2)$ on Line \ref{line:kernel-construct-begin}, a formula with exactly two solutions. Next, it randomly chooses a literal $l$ from $\mathbf{Lits_1} \cup \mathbf{Lits_2}$ and constructs a chain formula $(l \rightarrow \psi_{k,m})$ over the fresh Boolean variables ${\mathbf{V}[1], \mathbf{V}[2]\cdots, \mathbf{V}[m]}$ where $k$ is the number of satisfying assignments the formula has. Conjuncting the two generated formulas, we get $\hat\varphi \equiv \varphi \wedge (\sigma_1 \vee \sigma_2) $.
Therefore, at the end of \fulcrumkernel, i.e. line~\ref{line:kernel-loop-end}, $\hat{\varphi}$ has $2k$ solutions.  We choose the value of $k$ such that it is odd (see~\cite{CFMV15}).  The chain formula is linked to a random Boolean literal from the given set of literals for two reasons,
\begin{enumerate}
    \item An ideal or $\varepsilon$-close to ideal sampler would not be affected by the randomization and would generate the same distribution over $\hat{\varphi}$ as it does over ${\varphi \wedge (\sigma_1 \vee \sigma_2)}$.
	\item If the sampler under test $\mathcal{G}$ is $\eta$-far from ideal, then we want to construct a formula which \textit{cannot} be easily guessed by $\mathcal{A}$. We wish to avoid the scenario where $\mathcal{A}$, an $\eta$-far sampler on $\varphi$, somehow behaves as an almost-ideal sampler over $\hat\varphi$ and hence manages to fool {\barbarik}.
	
\end{enumerate}

\subsection{Theoretical Analysis}
The following theorem gives the mathematical guarantee about the correctness of {\barbarik}.

\begin{theorem} \label{thm:main}
	Given sampler $\mathcal{G}$, ideal sampler $\mathcal{A}$, $\varepsilon < \frac{1}{3}$, $\eta > 6\varepsilon$, $\delta$, $\varphi$  and weight function $\mathtt{wt}$, {\barbarik} needs at most
	$\widetilde{O}\left(\frac{tilt(\mathtt{wt},\varphi)^2}{\eta(\eta - 6\varepsilon)^3}\right)$ samples, where $\widetilde{O}$ hides a poly logarithmic factor of $1/\delta$. 
	\begin{itemize}
		\item  If $\mathcal{G}$ is an $\varepsilon$-close  to $\mathcal{A}$ then  {\barbarik} returns {\accept} with probability at least $(1-\delta)$.
		\item If $\mathcal{G}$ is subquery consistent w.r.t {\fulcrumkernel} and if the distribution $\mathcal{D}_{\mathcal{G}(\varphi)}$ is $\eta$-far from the ideal sampler  then {\barbarik}
		returns {\reject} with  probability at least $(1-\delta)$.
	\end{itemize}
\end{theorem}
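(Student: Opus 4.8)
The plan is to prove both claims inside one framework organized around the \emph{density ratio} $\rho(\sigma):=p_{\mathcal{G}}(\varphi,\sigma)/p_{\mathcal{A}}(\varphi,\sigma)$, which is well defined on $R_{\varphi_{\downarrow S}}$ since $\mathcal{A}$ is ideal and satisfies $\E_{\sigma\sim\mathcal{A}}[\rho(\sigma)]=1$. The key observation is that in any round with $\sigma_1\neq\sigma_2$ (writing $\alpha=\mathtt{wt}(\sigma_1)/\mathtt{wt}(\sigma_2)$), $Bias$ is the average of $N$ i.i.d.\ Bernoulli variables whose mean $\mu$ is the probability the relevant sampler puts on $\sigma_1$ when restricted to $\{\sigma_1,\sigma_2\}$; since $x\mapsto x/(1+x)$ is increasing, a bound on the odds $p(\sigma_1)/p(\sigma_2)$ transfers to a bound on $\mu$. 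In the {\accept} case, $\varepsilon$-closeness applied to the formula $\hat\varphi$ — together with the fact that {\fulcrumkernel} guarantees $R_{\hat\varphi_{\downarrow S}}=\{\sigma_1,\sigma_2\}$, so the ideal sampler on $\hat\varphi$ gives $\sigma_1,\sigma_2$ odds exactly $\alpha$ — bounds the odds of $\mathcal{G}(\hat\varphi)$ by $lo\cdot\alpha$, hence $\mu\le L$; in the {\reject} case, subquery consistency w.r.t.\ {\fulcrumkernel} identifies $\mathcal{D}_{\mathcal{G}(\hat\varphi)}$ with $\mathcal{D}_{\mathcal{G}(\varphi)\mid\{\sigma_1,\sigma_2\}}$, so $\mu=p_{\mathcal{G}}(\varphi,\sigma_1)/\bigl(p_{\mathcal{G}}(\varphi,\sigma_1)+p_{\mathcal{G}}(\varphi,\sigma_2)\bigr)$, which is $\ge H$ exactly when $\rho(\sigma_1)/\rho(\sigma_2)\ge hi$. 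Throughout I will use $hi>lo$ (hence $L<T<H$ with $T-L=H-T$), which follows from $\eta>6\varepsilon$ and $\varepsilon<1/3$; in fact $hi-lo\ge(\eta-6\varepsilon)/4$.

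For the {\accept} direction, $\mu\le L<T$ gives slack $T-\mu\ge(H-L)/2$, and a Chernoff upper-tail bound with $N=n\,H/(H-L)^2$ and $n=8\ln(t/\delta)$ gives $\Pr[Bias>T]\le\exp(-n/8)=\delta/t$ in every round. A union bound over the at most $t$ rounds then yields {\accept} with probability $\ge1-\delta$; note this direction uses only that $\mathcal{G}$ is $\varepsilon$-close on all formulas, not subquery consistency.

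The heart of the {\reject} direction is a structural lemma: if $\mathcal{D}_{\mathcal{G}(\varphi)}$ is $\eta$-far from $\mathcal{A}(\varphi)$ then for independent $\sigma_1\sim\mathcal{D}_{\mathcal{G}(\varphi)}$, $\sigma_2\sim\mathcal{D}_{\mathcal{A}(\varphi)}$ one has $\Pr[\rho(\sigma_1)\ge hi]\cdot\Pr[\rho(\sigma_2)<1]\ge\tfrac{\eta-6\varepsilon}{4}\cdot\tfrac{\eta}{2}$. To see this, split $O=\{\rho>1\}$ into $O_{hi}=\{\rho\ge hi\}$ and $O_{lo}=\{1<\rho<hi\}$; the one-sided mass excess satisfies $\sum_{O}p_{\mathcal{A}}(\rho-1)=\tfrac12\lVert p_{\mathcal{G}}-p_{\mathcal{A}}\rVert_1\ge\eta/2$ while $\sum_{O_{lo}}p_{\mathcal{A}}(\rho-1)\le(hi-1)\sum_{O_{lo}}p_{\mathcal{A}}\le(\eta+6\varepsilon)/4$, so $p_{\mathcal{G}}(O_{hi})\ge\sum_{O_{hi}}p_{\mathcal{A}}(\rho-1)\ge(\eta-6\varepsilon)/4$; symmetrically the deficit on $\{\rho<1\}$ forces $p_{\mathcal{A}}(\{\rho<1\})\ge\eta/2$, and independence multiplies the two. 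On the event $\rho(\sigma_1)\ge hi$, $\rho(\sigma_2)<1$ we have $\mu>H>T$, so a Chernoff lower-tail bound with the same $N$ — using that $\mu\mapsto(\mu-T)^2/\mu$ is increasing on $\mu>T$, so the worst case $\mu=H$ still gives failure $\le\delta/t$ — makes that round reject with probability $\ge1-\delta/t$. Hence each round rejects with probability at least $\tfrac{\eta(\eta-6\varepsilon)}{8}-\tfrac{\delta}{t}\ge\tfrac{\eta(\eta-6\varepsilon)}{10}$; the rounds are independent ($\Gamma_1,\Gamma_2$ consist of i.i.d.\ samples and each $\Gamma_3$ is drawn afresh), and $t=\ln(1/\delta)/\ln\!\bigl(10/(10-\eta(\eta-6\varepsilon))\bigr)$ is exactly the value for which $\bigl(1-\tfrac{\eta(\eta-6\varepsilon)}{10}\bigr)^t=\delta$, so {\barbarik} returns {\reject} with probability $\ge1-\delta$.

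Finally, the sample count is $2t+\sum_i N_i$; a short calculation from $H-L=\alpha(hi-lo)/\bigl((1+\alpha hi)(1+\alpha lo)\bigr)$, using $hi,lo=O(1)$, $hi-lo\ge(\eta-6\varepsilon)/4$ and $\alpha\in[1/tilt,\,tilt]$, gives $H/(H-L)^2=O\!\bigl(tilt^2/(\eta-6\varepsilon)^2\bigr)$, hence $N_i=O\!\bigl(\ln(t/\delta)\cdot tilt^2/(\eta-6\varepsilon)^2\bigr)$; combined with $t=O\!\bigl(\ln(1/\delta)/(\eta(\eta-6\varepsilon))\bigr)$ the product is $\widetilde O\!\bigl(tilt^2/(\eta(\eta-6\varepsilon)^3)\bigr)$. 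I expect the structural lemma to be the main obstacle: everything rests on showing that $\ell_1$-farness forces a noticeable probability of sampling a pair whose density-ratio gap clears $hi$, with constants lining up against the $6\varepsilon$ buffer built into $hi$ and against the round count $t$. The rest — the two Chernoff bounds and the union/product arguments over the $t$ independent rounds — is routine, the only delicate point being to keep the per-round failure probability ($\delta/t$, guaranteed by $n=8\ln(t/\delta)$) small enough that $t$ rounds still drive the total error below $\delta$.
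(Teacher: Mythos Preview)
Your proposal is correct and follows essentially the same approach as the paper's proof: your structural lemma is exactly the paper's Lemma~\ref{lem:rounds} (with $O_{hi}=U_1$, $\{\rho<1\}=D$), the per-round Chernoff analysis matches Lemmas~\ref{lem:per_iteration_error} and~\ref{lem:probofaccept}, and the sample-complexity calculation mirrors Lemma~\ref{thm:query}. Your density-ratio notation and the direct odds argument for $\mu\le L$ are a bit cleaner than the paper's two-case analysis, but the underlying decomposition, the key constants, and the overall logic are identical.
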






Note that if  $\mathcal{G}$ is $\varepsilon$-close to  $\mathcal{A}$ then {\barbarik}  accepts (with high probability) even if the sampler $\mathcal{G}$ isn't subquery consistent w.r.t  {\fulcrumkernel}. 
It is also worth noting that {\barbarik} terminates with {\reject} as soon as the check in line~\ref{line:compare} succeeds. Therefore, we expect {\barbarik} to require significantly less number of samples when it returns {\reject}. Furthermore, in the case of {\accept}, the bound on $N$, as calculated on line~\ref{line:ncalc} in terms of $tilt$, is pessimistic as the probability of observing $\sigma_1$ and $\sigma_2$ such that $\alpha \approx tilt$ for a sampler close to ideal is very small when the tilt is large. The proof of Theorem~\ref{thm:main} is presented in Appendix~\ref{sec:proof}.


\section{Evaluation}
\label{sec:experiment}
The objective of our evaluation was to answer the following questions:
\begin{enumerate}
	\item[\bf{RQ1.}] Is {\barbarik} able to distinguish between off-the-shelf samplers  by returning {\accept} for samplers $\varepsilon$-close to the ideal distribution and {\reject} for the $\eta$-far samplers?
	\item[\bf{RQ2.}] What improvements do we observe over the baseline?  
	\item[\bf{RQ3.}] How does the required number of samples scale with the $tilt(\mathtt{wt}, \varphi)$ of the distribution?
\end{enumerate}
To evaluate the runtime performance of {\barbarik} and test the
quality of some state of the art samplers, we implemented
a prototype of {\barbarik} in Python. Our algorithm utilizes an ideal sampler, for which we use the state of the art sampler WAPS~\citep{GSRM19}.
All experiments were conducted on a high performance computing cluster with 600 E5-2690 v3 $@ 2.60$GHz CPU cores. For each benchmark, we use a single core with a timeout of 24 hours.  The detailed logs along with list of benchmarks and the runtime code employed to run the experiments are available at \url{http://doi.org/10.5281/zenodo.4107136}.

We focus on the log-linear distributions given their ubiquity of usage in machine learning; a formal description is provided in Appendix~\ref{sec:weighted_dist} for completeness. Observe  that {\barbarik} does not put any restrictions on the representation of the weight distribution. We conducted our experiments on 72 publicly available benchmarks, which have been employed in the evaluation of samplers proposed in the past~\cite{CFMSV14,DLBS18}. The $tilt$ of the benchmarks spans many orders of magnitude, between $1$ and $10^{11}$.

\paragraph*{Samplers Tested}
The past few years have witnessed a multitude of sampling techniques ranging from variational methods~\cite{WJ2008}, MCMC-based techniques~\cite{JS96,M02}, mutation-based sampling~\cite{DLBS18}, importance sampling-based methods~\cite{EGS12}, knowledge-compilation techniques~\cite{GSRM19} and the like. The conceptual simplicity of uniform samplers encourages designers to tune their algorithms for uniform sampling, and the standard technique for weighted sampling employs the well-known method of the inverse transform. For the sake of completeness, we provide a detailed discussion of the transformation technique in Appendix~\ref{sec:weighted_dist}

We perform empirical evaluation with the three state of the art samplers {\wWeightGen}, {\wQuickSampler}, and {\wSearchTreeSampler} constructed by augmenting inverse sampling  with underlying samplers {\WeightGen}~\cite{CFMSV14}, {\QuickSampler}~\cite{DLBS18} and {\SearchTreeSampler}(SearchTreeSampler)~\cite{EGS12} respectively. 

While {\wWeightGen} is known to have theoretical guarantees of $\varepsilon-$closeness, there is no theoretical analysis of the distributions generated by {\wQuickSampler} and {\wSearchTreeSampler}.  Of the 72 instances, {\wWeightGen} can handle only  35 instances while {\wQuickSampler} and {\wSearchTreeSampler} can handle all the 72  instances. The variation in the number of instances that are amenable to sampling for a particular sampler highlights the trade-off between the runtime performance and theoretical guarantees.  It is perhaps worth emphasizing that {\wQuickSampler} and {\wSearchTreeSampler} are significantly more efficient in runtime performance than the ideal sampler WAPS. 

\paragraph*{Test Parameters}
We set tolerance parameter $\varepsilon$, intolerance parameter $\eta$, and confidence $\delta$ for {\barbarik} to be and 0.1, 1.6 and 0.2 respectively. The chosen setting of parameters implies that for a given Boolean formula $\varphi$, if the sampler under test $\mathcal{G}(\varphi)$ is $\varepsilon$-close to the ideal sampler, then {\barbarik} returns {\accept} with probability at least 0.8, otherwise if the sampler is $\eta$-far from ideal sampler then {\barbarik} returns {\reject} with probability at least 0.8.  Note that, the number of samples required for {\accept} depends only on the parameters ($\varepsilon, \eta,\delta$) and $tilt(\mathtt{wt}, \varphi)$.  We instantiate {\fulcrumkernel} with the values $m = 12$ and $k = 2^{m}-1$. Observe that Theorem~\ref{thm:main} does not put restrictions on $k$ and $m$. 

\paragraph{Description of the table} We present the experimental results in Table \ref{table:results}. Due to lack of space, we present results for a subset of benchmarks while the extended table is available in the supplementary material. The first column indicates the name of the benchmark, the second the $tilt$, and the following columns indicate the outcome of the experiments with {\wWeightGen}, {\wSearchTreeSampler} and {\wQuickSampler} in that order. Every cell in the table has two entries. In the second column, the first entry shows the value of $tilt$ for the corresponding benchmark, while in the other columns, it contains ``A'' and ``R'' to indicate the output of {\barbarik} for the corresponding sampler. The second entry for the cells in the column corresponding to $tilt$ indicates the theoretical upper bound on the samples required for {\barbarik} to terminate, while for rest of the columns, the second entry indicates the number of samples consumed by {\barbarik} for the corresponding instance and the sampler. 
\begin{table}
\footnotesize
\begin{tabular}{{l>{\centering}m{0.16\linewidth}>{\centering\arraybackslash}m{0.18\linewidth}>{\centering\arraybackslash}m{0.16\linewidth}>{\centering\arraybackslash}m{0.18\linewidth}}}
  \toprule  && \multicolumn{3}{c}{\barbarik}  \\ \cmidrule(l){3-5} 
 Benchmark&\shortstack{$tilt$\\ (maxSamp)}&\shortstack{\wWeightGen\\ (samples)}&\shortstack{\wSearchTreeSampler\\ (samples)}&\shortstack{\wQuickSampler\\ (samples)}\\
	\midrule 
	s349\_3\_2&\shortstack{28\\ (3e+07)}&\shortstack{A\\ (1e+05)}&\shortstack{A\\ (1e+05)}&\shortstack{R\\ (22854)}\\ \midrule 
	s820a\_3\_2&\shortstack{37\\ (5e+07)}&\shortstack{A\\ (96212)}&\shortstack{R\\ (87997)}&\shortstack{A\\ (2e+05)}\\ \midrule 
UserServiceImpl.sk&\shortstack{140\\ (6e+08)}&\shortstack{A\\ (1e+05)}&\shortstack{R\\ (1e+05)}&\shortstack{R\\ (4393)}\\ \midrule 
LoginService2.sk&\shortstack{232\\ (2e+09)}&\shortstack{A\\ (1e+05)}&\shortstack{R\\ (38044)}&\shortstack{R\\ (13350)}\\ \midrule 
s349\_7\_4&\shortstack{603\\ (1e+10)}&\shortstack{A\\ (75555)}&\shortstack{R\\ (4284)}&\shortstack{R\\ (5150)}\\ \midrule 
s344\_3\_2&\shortstack{3300\\ (3e+11)}&\shortstack{A\\ (1e+05)}&\shortstack{R\\ (59952)}&\shortstack{R\\ (5150)}\\ \midrule 
s420\_new\_7\_4&\shortstack{3549\\ (4e+11)}&\shortstack{A\\ (82312)}&\shortstack{A\\ (96659)}&\shortstack{R\\ (49955)}\\ \midrule 
54.sk\_12\_97&\shortstack{4e+11\\(6e+27)}&DNS&\shortstack{R\\ (14012)}&\shortstack{R\\ (4627)}\\ \midrule 
s641\_7\_4&\shortstack{9e+07\\(3e+20)}&DNS&\shortstack{R\\ (8747)}&\shortstack{A\\ (1e+06)}\\ \midrule 
s838\_3\_2&\shortstack{2e+08\\(1e+21)}&DNS&\shortstack{R\\ (9504)}&\shortstack{R\\ (4627)}\\  \hline 
		\end{tabular}
\caption{``A"(resp. ``R") represents {\barbarik} returning {\accept}(resp. {\reject}). maxSamp represents the upper bound on the number of samples required by {\barbarik} to return {\accept}/{\reject}.} \label{table:results}
\end{table}

\paragraph{\bf{RQ1}} Our experiments demonstrate that {\barbarik} returns {\reject} for {\wQuickSampler} on 68 benchmarks and {\accept} on the remaining four benchmarks. For {\wSearchTreeSampler} we found {\barbarik} returned {\reject} on 62 of the benchmarks and {\accept} on 7 while it times out on the remaining 3. Since {\wSearchTreeSampler} and {\wQuickSampler} are samplers with no formal guarantees and therefore one may expect them to generation distributions away from the ideal distributions. In this context, the results in Table~\ref{table:results} provide strong evidence for the reasonableness of the {\em subquery consistency} assumption in practice.  

In contrast, {\barbarik} returned {\accept} for {\wWeightGen} on all the 35 benchmarks for which {\wWeightGen} could sample. Recall, {\wWeightGen} formally guarantees $\varepsilon$-closeness of the samples to the required distribution, hence {\barbarik} returning {\accept}  on all the benchmarks provides evidence in support of soundness of {\barbarik}. 

\paragraph{\bf{RQ2}} We also computed the number of samples required by the baseline approach owing to~\cite{batu}. Since the number of samples is so large that exhaustive experimentation is infeasible, we had to resort to estimating the average time taken by a sampler for a given instance. Based on the estimated time, we can estimate the time taken by the baseline for our benchmark set.  We observe that the time taken by the baseline would be over $10^6$ seconds for 43, 42 and 16 benchmarks for {\wQuickSampler}, {\wSearchTreeSampler} and {\wWeightGen} respectively. In this context, it is worth highlighting that {\barbarik} terminates within 24 hours for all the instances for all the samplers. We observe that the geometric means of the speedups over the baseline approach are $10^{5.0}, 10^{20.2}$ and $58$ for {\wSearchTreeSampler}, {\wQuickSampler} and, {\wWeightGen} respectively. The lower speedup in the case of {\wWeightGen} owes to its ability to handle only small benchmarks, for which the number of models was not very large. The extended results are available in Appendix~\ref{sec:detailedresults}. 
 
\paragraph{\bf{RQ3}} The number of trials required (indicated by the the variable $t$ as on Line~\ref{line:barloop2} of Algorithm~\ref{algo:theory_new}) depends only on ($\varepsilon, \eta,\delta$), so for the values we use, $(0.1, 1.6, 0.2)$, we find that we require $t = 14$ trials. The analysis of the algorithm reveals an upper bound on the sample complexity of the tester (See Section~\ref{sec:barbarik}, Theorem~\ref{thm:main}) which is quadratic in terms of the $tilt(\mathtt{wt}, \varphi)$. We now return to Table~\ref{table:results} and observe that the number of samples required by {\barbarik} before returning {\accept} were significantly lower than the theoretical bound provided in the second column. Furthermore, as noted earlier, the number of samples required before {\barbarik} returns {\reject} is typically significantly less than the worst case -- a trend demonstrated in Table~\ref{table:results}.

\section{Conclusion}\label{sec:conclusion}
	In this paper, we study the problem of verifying whether a probabilistic sampler samples from a given discrete distribution. 
	Existing approaches require samples linear in the size of the sampling set, which is commonly exponentially large. We present a conditional sampling technique that can verify the sampler in sample complexity constant in terms of the sampling set. We also test a prototype implementation of our algorithm against three state-of-the-art samplers. 
	
	We noticed that the analytical upper bound on the sample complexity is significantly weak compared to our observed values; this suggests that the bounds could be further tightened. Our algorithm can only deal with those discrete distributions for which the relative probabilities of any two points is easily computable. 
	Since our algorithm does not deal with all possible discrete distributions, extending the approach to other distributions would enable the testing of a broader set of samplers. 

\section*{Broader Impact}

The recent advances in machine learning techniques have led to increased adoption of the said techniques in safety-critical domains. The usage of a technique in a safety-critical domain necessitates appropriate verification methodology. This paper seeks to take a step in this direction and focused on one core component. Our analysis is probabilistic, and therefore, practical adoption of such techniques requires careful design of frameworks to handle failures. 
\section*{Acknowledgements}
We are grateful to Teodora Baluta and Arijit Shaw for the technical help and for the useful comments on the earlier drafts of the work. We are grateful to the anonymous reviewers for their constructive feedback that has greatly improved the quality of the paper. This work was supported in part by the National Research Foundation Singapore under its NRF Fellowship Programme [NRF-NRFFAI1-2019-0004] and the AI Singapore Programme [AISG-RP-2018-005],  and NUS ODPRT Grant [R-252-000-685-13]. The computational work for this article was performed on resources of the National Supercomputing Centre, Singapore \href{https://www.nscc.sg}{https://www.nscc.sg}. Any opinions, findings and conclusions or recommendations expressed in this material are those of the author(s) and do not reflect the views of National Research Foundation, Singapore.

	\bibliographystyle{plainnat}
	\bibliography{Report}
\clearpage
\appendix
\section*{Appendix}
%
%
%
%
%

\section{Relationship of {\barbarik} with Property Testing}\label{sec:propertytestingoverview}

Testing of samplers is basically testing if two distributions $\mathcal{D}_{\mathcal{G}(\varphi)}$ and $\mathcal{A}_{\mathcal{G}(\varphi)}$ are similar, where $\mathcal{G}$ is the sampler 
under test and $\mathcal{A}$ is the ideal sampler.  As stated in the Introduction and the Related Work section, the sub-field of property testing in theoretical computer science 
has been studying this problem for over two decades and our tester {\barbarik} draws ideas from some of the latest research in this area.  

In understanding the closeness between two distributions one may consider a variety of different distance measures. The variation distance (also called the $\ell_1$ distance) is possibly most commonly used. 
In property testing the problem is to distinguish between the case where the two distributions are $\varepsilon$-close in $\ell_1$ distance from the case where the distributions are $\eta$-far from each other in $\ell_1$ distance. 
An easier question, called the ``equivalence testing of distributions" considers the problem of distinguish identical distributions from distributions that are $\eta$-far from each other in $\ell_1$ distance. The former question, often 
referred to as the tolerant version of equivalence testing of distributions or estimation of variation distance, is more suitable for various applications. 
The goal in all the settings is to minimize the sample complexity. The time complexity or other complexity measures are usually not considered in property testing literature.  

The problem of equivalence testing of distributions was first considered by \cite{batu} and they (along with \cite{Valiant08} ) showed that the sample complexity was $\Theta(N^{2/3})$, where $N$ is the size of the support of the distributions. 
Note that, in the setting of samplers, $N$ is exponential in the input size and hence the number is prohibitively large. The tolerant version of the problem was proved to have even higher sample complexity of $\Theta(N)$ (\cite{Valiant08, ValiantV11}).
This was a significant bottleneck is practicality of these property testing algorithms and the tight lower bounds implied that no improvement was possible for algorithms that has only blackbox access to the distributions.  Even the much 
simpler problem of testing if a distribution is uniform requires $\Omega(\sqrt{N})$ samples. 

In \cite{conditional1, CanonneRS15} a new model for sampling was introduced called the conditional sampling. This model allowed access to the distributions that the standard sampling method (or the blackbox access to the distributions) could not give. 
It allowed a kind of grey-box access to the distributions. It was shown that in this model only $O(1/\varepsilon^2)$ conditional samples were needed to test if a distribution is uniform or $\varepsilon$-far from uniform. In fact similar conditional sample complexity 
is sufficient for the non-tolerant version of the equivalence testing of distributions. For  the tolerant version of  equivalence testing of distributions it was shown that polynomial in $\log(N)$ number of conditional samples suffice. Although this brings down the sample complexity drastically but still it was quite high for practical implementations.  On top of that  a major obstacle was whether the conditional samples were at all practical and were they implementable. 

In \cite{CM19} they successfully used the idea from the conditional sampling testing to test if samplers are uniform. They crucially used a special kind of conditional sampling. In \cite{CanonneRS15} a concept of pair-conditioning (they called PCOND) 
was introduced to define a restricted version of the conditional sampling model. A normal conditional sample is obtained by specifying a subset $S$ of the domain of the distribution $\mathcal{D}$ and then drawing a random sample from the conditional 
distribution $\mathcal{D}|_S$. A PCOND-sample is a normal conditional sample where the subset $S$ is of size $2$.  In  \cite{CM19} it was shown how this kind of restricted samples can be successfully implemented using a clever use of chain-formulas. 

When it come to the more general problems of non-tolerant version of  equivalence testing of distributions it can be shown that the  sample complexity in the PCOND-model is at least polynomial in $\log N$. The tolerant version has even higher 
PCOND-sample complexity. Since our primary objective was to have a tester that can be practical and implementable we had to circumvent the problem of high sample complexity and also of implementational issues of conditional sampling.
In our tester {\barbarik} we addressed these problems by using another trick from \cite{CM19}, that of, using two different notions of distance - $\ell_{\infty}$ for closeness and $\ell_1$ for farness.  In {\barbarik} we re-designed the 
sampler and give a proof of correctness in this paradigm using very different techniques as compare to that used in \cite{CM19}. 

It is worth noting here that recently conditional sampling and its various variants has been used to design efficient testing and learning algorithms for various other properties of distributions (\cite{clement, KamathT19,  BhattacharyyaC18, kane17, Jayaram20}). Many of these have the potential to be used more efficient and sophisticated testing of samplers and related questions. But the major question is the practicality of the models and the implementability of the algorithms.

\section{Proof of Correctness of {\barbarik}}\label{sec:proof}

In this section, we present the theoretical analysis of {\barbarik}, and the proof of Theorem~\ref{thm:main}. The proof clearly follows from the the following three lemmas.

\begin{restatable}[]{lemma}{complete}\label{thm:complete}
	If a sampler $\mathcal{G}$ is $\varepsilon$-close \footnote{for any $\varepsilon < \frac{1}{3}$ and $\eta > 6 \varepsilon$ } to the ideal sampler $\mathcal{A}$ then  {\barbarik} returns {\accept} with probability at least $(1-\delta)$.
\end{restatable}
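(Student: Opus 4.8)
Let me think about what needs to happen. We have an $\varepsilon$-close sampler $\mathcal{G}$, meaning for every formula $\psi$ and every $\sigma \in R_\psi$, $(1-\varepsilon)p_\mathcal{A}(\psi,\sigma) \le p_\mathcal{G}(\psi,\sigma) \le (1+\varepsilon)p_\mathcal{A}(\psi,\sigma)$. We need to show `Barbarik2` returns ACCEPT with probability $\ge 1-\delta$.

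`Barbarik2` returns REJECT only if at some iteration $i$, after drawing $\sigma_1 \ne \sigma_2$ and constructing $\hat\varphi \gets \fulcrumkernel(\varphi,\sigma_1,\sigma_2)$, and drawing $N$ samples $\Gamma_3$ from $\mathcal{G}(\hat\varphi, S, N)$, the quantity $\mathtt{Bias} = \bias(\sigma_1, \Gamma_3, S)$ exceeds the threshold $T = (H+L)/2$.

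So the plan is: union bound over the $t$ iterations; in each iteration, show that the probability that $\mathtt{Bias} > T$ is at most $\delta/t$. For this I need two ingredients.

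Think about what $\mathcal{G}(\hat\varphi)$ does. The formula $\hat\varphi$ has exactly $2k$ satisfying assignments over $Supp(\hat\varphi)$; their projections onto $S$ are either $\sigma_1$ (for $k$ of them) or $\sigma_2$ (for $k$ of them) — actually I need to be careful: $\hat\varphi = \varphi \wedge (\sigma_1 \vee \sigma_2)$, conjuncted with the chain-formula gadgets on fresh variables $\mathbf{V}$, and the chain formula $\psi_{k,m}$ has $k$ satisfying assignments. Since both the $l$-branch and the $\lnot l$-branch get a copy of $\psi_{k,m}(\mathbf{V})$... wait, let me reread. Actually lines 7–8 say $\hat\varphi \gets \hat\varphi \wedge (\lnot l \to \psi_{k,m}(\mathbf{V}))$ and $\hat\varphi \gets \hat\varphi \wedge (l \to \psi_{k,m}(\mathbf{V}))$. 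So regardless of $l$, the fresh variables must satisfy $\psi_{k,m}$, which has $k$ solutions. And the literal $l$ distinguishes $\sigma_1$ from $\sigma_2$ (it's in $\mathbf{Lits_1} \cup \mathbf{Lits_2}$), so for the $\sigma_1$-branch $l$ has one value and for $\sigma_2$-branch the other. Hence $\hat\varphi$ has exactly $2k$ satisfying assignments: $k$ extending $\sigma_1$ and $k$ extending $\sigma_2$.

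Now, what does the ideal sampler $\mathcal{A}$ do on $\hat\varphi$? The weight $\mathtt{wt}$ is... hmm, we need to know how $\mathtt{wt}$ extends to the fresh variables. I'll assume (as is implicit in the setup, and as used in the non-uniform case) that the fresh variables are given weight $1$ (or that $\mathtt{wt}(\sigma)$ depends only on $Supp(\varphi)$-part). So each of the $k$ assignments extending $\sigma_1$ has weight $\mathtt{wt}(\sigma_1)$ and each of the $k$ extending $\sigma_2$ has weight $\mathtt{wt}(\sigma_2)$. Therefore the ideal probability that a sample from $\mathcal{A}(\hat\varphi)$ projects to $\sigma_1$ on $S$ is exactly $\frac{k\,\mathtt{wt}(\sigma_1)}{k\,\mathtt{wt}(\sigma_1) + k\,\mathtt{wt}(\sigma_2)} = \frac{\alpha}{1+\alpha}$ where $\alpha = \mathtt{wt}(\sigma_1)/\mathtt{wt}(\sigma_2)$ as defined on line 11. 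Call this $p^* = \alpha/(1+\alpha)$.

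**Key step 1 (bias of $\mathcal{G}$ on $\hat\varphi$ is in $[L,H]$).** Since $\mathcal{G}$ is $\varepsilon$-close to $\mathcal{A}$, applied to the formula $\hat\varphi$: for each satisfying assignment $\rho$ of $\hat\varphi$, $(1-\varepsilon)p_\mathcal{A}(\hat\varphi,\rho) \le p_\mathcal{G}(\hat\varphi,\rho) \le (1+\varepsilon)p_\mathcal{A}(\hat\varphi,\rho)$. Summing over the $k$ assignments projecting to $\sigma_1$ and over all $2k$, the probability $q := \Pr[\text{sample from }\mathcal{G}(\hat\varphi)\text{ projects to }\sigma_1]$ satisfies
$$q = \frac{\sum_{\rho \to \sigma_1} p_\mathcal{G}(\hat\varphi,\rho)}{\sum_{\rho} p_\mathcal{G}(\hat\varphi,\rho)}.$$
The numerator is between $(1-\varepsilon)$ and $(1+\varepsilon)$ times $k\,\mathtt{wt}(\sigma_1)/W$ (where $W$ is the total weight), similarly the denominator-minus-numerator (the $\sigma_2$ mass) is between $(1-\varepsilon)$ and $(1+\varepsilon)$ times $k\,\mathtt{wt}(\sigma_2)/W$. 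So
$$q = \frac{a_1}{a_1 + a_2}, \quad \text{with } a_1 \in [(1-\varepsilon)\alpha, (1+\varepsilon)\alpha],\ a_2 \in [1-\varepsilon, 1+\varepsilon]$$
(after scaling so $a_2$'s ideal value is $1$, $a_1$'s ideal value is $\alpha$). The ratio $a_1/a_2$ therefore lies in $[\alpha(1-\varepsilon)/(1+\varepsilon),\ \alpha(1+\varepsilon)/(1-\varepsilon)] \subseteq [\alpha/lo, \alpha\cdot lo]$ since $lo = (1+\varepsilon)/(1-\varepsilon)$. And $x \mapsto x/(1+x)$ is increasing, so $q \in [\frac{\alpha/lo}{1+\alpha/lo},\ \frac{\alpha\cdot lo}{1+\alpha\cdot lo}]$. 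The upper endpoint is exactly $L = (\alpha\cdot lo)/(1+\alpha\cdot lo)$ as defined. Wait — that says $q \le L$. Good: so $q \le L < T$. (The lower bound on $q$ is not needed for completeness.) Actually wait, I should double check: is $L$ defined with $lo$ and $H$ with $hi$? Yes: $L = (\alpha\cdot lo)/(1+\alpha\cdot lo)$ and $H = (\alpha\cdot hi)/(1+\alpha\cdot hi)$, with $lo = (1+\varepsilon)/(1-\varepsilon) < hi = 1 + (\eta+6\varepsilon)/4$ (this inequality uses $\varepsilon < 1/3$, $\eta > 6\varepsilon$ — should verify, but it's a routine calculation). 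So $L < H$ and $q \le L < (H+L)/2 = T$.

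**Key step 2 (concentration).** $\mathtt{Bias}$ is the empirical mean of $N$ i.i.d. Bernoulli$(q)$ random variables. We want $\Pr[\mathtt{Bias} > T] \le \delta/t$. Since $q \le L$ and $T - L = (H-L)/2$, by a multiplicative/additive Chernoff bound (or Hoeffding), $\Pr[\mathtt{Bias} > T] \le \Pr[\mathtt{Bias} - q > (H-L)/2] \le \exp(-N(H-L)^2/(2H))$ or a similar bound with the chosen constants — here I'd use the Chernoff bound in the form $\Pr[\overline{X} \ge q + \gamma] \le \exp(-N\gamma^2/(3\bar p))$ or Bernstein. With $N = n\cdot H/(H-L)^2$ and $n = 8\ln(t/\delta)$, this gives $\exp(-\Theta(\ln(t/\delta))) \le \delta/t$. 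I expect the exact constants ($8$, the $/2$ vs $/3$) to require the precise Chernoff statement; I'd pick the version that makes $n = 8\ln(t/\delta)$ tight, likely Hoeffding's: $\Pr[\overline{X} - q \ge \gamma] \le \exp(-2N\gamma^2)$, but then the $H/(H-L)^2$ scaling rather than $1/(H-L)^2$ suggests a Chernoff-type bound sensitive to the mean — I'll use $\Pr[\overline{X} \ge (1+\beta)q'] \le \exp(-N q' \beta^2/3)$ style applied with an appropriate upper bound $q' \ge q$ on the mean. This is the one genuinely fiddly spot.

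**Finishing.** By the union bound over the $\le t$ iterations where the check on line~\ref{line:compare} is performed, the probability that `Barbarik2` ever reaches line~\ref{line:barreject} is at most $t \cdot (\delta/t) = \delta$. Hence it returns ACCEPT with probability at least $1-\delta$. Note that the iterations where $\sigma_1 = \sigma_2$ contribute nothing (the algorithm just continues), and note also that this argument never invokes subquery consistency — it only uses $\varepsilon$-closeness of $\mathcal{G}$ on the formula $\hat\varphi$ directly, which is exactly why the theorem states ACCEPT holds even without subquery consistency.

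**Main obstacle.** The conceptually delicate point is Key step 1: correctly tracking how the $\varepsilon$-multiplicative guarantee on individual assignments of $\hat\varphi$ propagates to a bound on the aggregated two-point (projected) distribution, and checking that the resulting interval for $q$ is precisely $[\,\cdot\,, L]$ with the constants as defined — in particular verifying $lo < hi$ under the hypotheses $\varepsilon < 1/3$, $\eta > 6\varepsilon$, and that this forces $q < T$. The second genuine obstacle is matching the Chernoff bound constants to the definitions of $n$ and $N$ so the per-round failure probability is exactly $\le \delta/t$; this is routine but must be done with the right form of the concentration inequality.
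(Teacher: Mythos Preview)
Your proposal is correct and follows essentially the same approach as the paper: bound the per-iteration rejection probability by $\delta/t$ via (i) showing the expected bias $q = p_{\mathcal{G}}(\hat\varphi, S, \sigma_1) \le L$ from $\varepsilon$-closeness applied directly to $\hat\varphi$, and (ii) a Chernoff bound with the chosen $N = nH/(H-L)^2$, then combine over the $t$ iterations. Your ratio argument for $q \le L$ (bounding $q/(1-q) \le \alpha \cdot lo$) is in fact a slight streamlining of the paper's two-case analysis on whether $p_{\mathcal{A}}(\varphi,S,\sigma_1) \lessgtr p_{\mathcal{A}}(\varphi,S,\sigma_2)$, and your union bound over iterations is equivalent to the paper's product form $(1-\delta/t)^t \ge 1-\delta$.
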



\begin{restatable}[]{lemma}{sound}\label{thm:sound}
        If $\mathcal{G}$ is subquery consistent w.r.t {\fulcrumkernel} and if the distribution $\mathcal{D}_{\mathcal{G}(\varphi)}$ is
	$\eta$-far from the ideal sampler  then {\barbarik}
	returns {\reject} with  probability at least $(1-\delta)$.
\end{restatable}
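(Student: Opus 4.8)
The plan is to prove this by a per-round analysis of the for-loop on lines~\ref{line:barloop2}--\ref{line:barreject}. In round $i$ the algorithm uses the pair $(\sigma_1,\sigma_2)=(\Gamma_1[i],\Gamma_2[i])$, with $\sigma_1$ drawn from $\mathcal D_{\mathcal G(\varphi)}$ and $\sigma_2$ independently from $\mathcal D_{\mathcal A(\varphi)}$, together with a fresh batch $\Gamma_3$ of $N$ samples. Distinct rounds use disjoint fresh randomness, so the events ``round $i$ outputs \reject'' are independent and share a probability $p_r$; hence \barbarik\ returns \accept\ with probability $(1-p_r)^t$. Line~\ref{line:trials} picks $t$ so that $\bigl(1-\tfrac1{10}\eta(\eta-6\varepsilon)\bigr)^t\le\delta$, so it is enough to show $p_r\ge\tfrac1{10}\eta(\eta-6\varepsilon)$. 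I will obtain this from two facts. \emph{Confirmation:} whenever the pair is a \emph{witness} --- meaning the true conditional bias $\mu:=\dfrac{p_{\mathcal G(\varphi)}(\sigma_1)}{p_{\mathcal G(\varphi)}(\sigma_1)+p_{\mathcal G(\varphi)}(\sigma_2)}$ satisfies $\mu\ge H$ --- the test $Bias>T$ on line~\ref{line:compare} succeeds with probability $\ge 1-\delta/t$. \emph{Witness:} the pair is a witness with probability $\ge\tfrac9{64}\eta(\eta-6\varepsilon)$. Since the witness event depends only on $(\Gamma_1[i],\Gamma_2[i])$ and the confirmation only on $\Gamma_3$ given the pair, $p_r\ge\tfrac9{64}\eta(\eta-6\varepsilon)\,(1-\delta/t)\ge\tfrac1{10}\eta(\eta-6\varepsilon)$. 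Throughout we use that $\varepsilon<\tfrac13$ and $\eta>6\varepsilon$ imply $lo<hi$ (indeed $hi-lo>(\eta-6\varepsilon)/4>0$), hence $L<T<H$.

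For confirmation, note first that $\fulcrumkernel$ (Algorithm~\ref{alg:kernel}) returns $\hat\varphi$ with $R_{\hat\varphi\downarrow S}=\{\sigma_1,\sigma_2\}$: the conjunction $(\lnot l\to\psi_{k,m}(\mathbf V))\wedge(l\to\psi_{k,m}(\mathbf V))$ is equivalent to $\psi_{k,m}(\mathbf V)$, which merely multiplies the model count over the fresh variables and leaves the $S$-projection equal to that of $\varphi\wedge(\sigma_1\vee\sigma_2)$. By the subquery-consistency hypothesis (Definition~\ref{def:nonadversarial}), $\Gamma_3\gets\mathcal G(\hat\varphi,S,N)$ is an i.i.d.\ sample from $\mathcal D_{\mathcal G(\varphi)\mid\{\sigma_1,\sigma_2\}}$, so $Bias$ is the mean of $N$ independent $\mathrm{Bernoulli}(\mu)$ indicators. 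On a witness round $\mu\ge H>T$, so $\Pr[Bias\le T]$ is a Chernoff lower-tail probability; with $N=n\,H/(H-L)^2$ one checks $N(\mu-T)^2/\mu\ge n/4$ uniformly for $\mu\ge H$ (the function $(\mu-T)^2/\mu$ increases on $[H,\infty)$ and equals $(H-L)^2/(4H)$ at $\mu=H$), whence $\Pr[Bias\le T]\le e^{-n/8}\le\delta/t$ since $n=8\ln(t/\delta)$.

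The witness step is the crux and the main obstacle. Put $\rho(\sigma):=p_{\mathcal G(\varphi)}(\sigma)/p_{\mathcal A(\varphi)}(\sigma)$, which is finite on $R_{\varphi\downarrow S}$ because $\mathtt{wt}>0$. With $\alpha=\mathtt{wt}(\sigma_1)/\mathtt{wt}(\sigma_2)=p_{\mathcal A(\varphi)}(\sigma_1)/p_{\mathcal A(\varphi)}(\sigma_2)$ and the definitions of $H,hi$ one has $\mu\ge H\iff\rho(\sigma_1)/\rho(\sigma_2)\ge hi$. Now $\eta$-farness says $\sum_\sigma p_{\mathcal A(\varphi)}(\sigma)\,|\rho(\sigma)-1|\ge\eta$, and since $\sum_\sigma\bigl(p_{\mathcal G(\varphi)}(\sigma)-p_{\mathcal A(\varphi)}(\sigma)\bigr)=0$, the ``excess'' $\sum_{\rho>1}p_{\mathcal A(\varphi)}(\sigma)(\rho(\sigma)-1)$ and the ``deficit'' $\sum_{\rho<1}p_{\mathcal A(\varphi)}(\sigma)(1-\rho(\sigma))$ are equal, each $\ge\eta/2$. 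Truncating the excess at level $1+c$ (the portion over $\{1<\rho<1+c\}$ is at most $c\cdot p_{\mathcal A(\varphi)}(\{1<\rho<1+c\})\le c$) gives $p_{\mathcal G(\varphi)}(\{\rho\ge1+c\})\ge\tfrac\eta2-c$, and symmetrically $p_{\mathcal A(\varphi)}(\{\rho\le1-c\})\ge\tfrac\eta2-c$. Choosing $c:=\tfrac{\eta+6\varepsilon}{8+\eta+6\varepsilon}$ makes $(1+c)/(1-c)=hi$ exactly, so any pair with $\sigma_1\in\{\rho\ge1+c\}$ and $\sigma_2\in\{\rho\le1-c\}$ is a witness --- and, these two sets being disjoint, automatically $\sigma_1\ne\sigma_2$, so the \textbf{continue} branch of line~\ref{line:ifequal1} is not taken. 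By independence of $\sigma_1\sim\mathcal G$ and $\sigma_2\sim\mathcal A$, the witness probability is at least $\bigl(\tfrac\eta2-c\bigr)^2\ge\bigl(\tfrac{3\eta-6\varepsilon}{8}\bigr)^2=\tfrac9{64}(\eta-2\varepsilon)^2\ge\tfrac9{64}\eta(\eta-6\varepsilon)$, using $c\le\tfrac{\eta+6\varepsilon}{8}$ and $(\eta-2\varepsilon)^2\ge\eta(\eta-6\varepsilon)$.

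Combining the two, $p_r\ge\tfrac1{10}\eta(\eta-6\varepsilon)$ and hence $(1-p_r)^t\le\delta$, i.e.\ \barbarik\ returns \reject\ with probability at least $1-\delta$. The hardest point is the witness step: one must calibrate the truncation level $c$ so that $\ell_1$-farness is converted into a witness probability that is $\Omega(\eta(\eta-6\varepsilon))$ with a constant comfortably above $\tfrac1{10}(1-\delta/t)^{-1}$, and the reason this is possible is the asymmetry --- $\sigma_1$ is sampled from $\mathcal G$, whose law is tilted toward the over-represented region $\{\rho>1\}$ responsible for the farness, while $\sigma_2$ is sampled from $\mathcal A$. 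The only role of the subquery-consistency assumption is in the confirmation step, where it is exactly what identifies the $\Gamma_3$ samples with draws from $\mathcal D_{\mathcal G(\varphi)\mid\{\sigma_1,\sigma_2\}}$.
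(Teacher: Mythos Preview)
Your argument is correct and follows the same two-stage per-round skeleton as the paper's proof: a \emph{witness} event on the pair $(\sigma_1,\sigma_2)$ and a \emph{confirmation} event on $\Gamma_3$ via Chernoff, combined over the $t$ independent rounds. The confirmation step is essentially identical to the paper's (Lemma~\ref{lem:probofaccept}).

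Where you differ is in the witness decomposition. The paper (Lemma~\ref{lem:rounds}) takes the \emph{asymmetric} sets $U_1=\{\rho>hi\}$ and $D=\{\rho\le 1\}$, shows $p_{\mathcal G}(U_1)\ge(\eta-6\varepsilon)/4$ and $p_{\mathcal A}(D)\ge\eta/2$, and multiplies to get $\eta(\eta-6\varepsilon)/8$. You instead take the \emph{symmetric} sets $\{\rho\ge 1+c\}$ and $\{\rho\le 1-c\}$ with $c$ calibrated so that $(1+c)/(1-c)=hi$, and obtain the product $(\eta/2-c)^2\ge 9(\eta-2\varepsilon)^2/64$. Your constant is in fact slightly sharper (indeed $9(\eta-2\varepsilon)^2-8\eta(\eta-6\varepsilon)=(\eta+6\varepsilon)^2>0$), and your decomposition has the pleasant side effect that disjointness of the two sets is immediate, handling the $\sigma_1=\sigma_2$ branch cleanly. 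The paper's asymmetric choice, on the other hand, avoids introducing the auxiliary parameter $c$ and reads off the thresholds directly from $hi$.

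One small point you leave implicit: the combination $\tfrac{9}{64}(1-\delta/t)\ge\tfrac{1}{10}$ needs $\delta/t\le 26/90$. The paper addresses the analogous issue by invoking $\delta<\tfrac12$ and $t\ge 3$ (so $\delta/t<\tfrac16$); you should state the same hypothesis.
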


\begin{restatable}[]{lemma}{query}\label{thm:query}
	Given $\varepsilon$, $\eta$ and $\delta$, {\barbarik} needs at most
	$\widetilde{O}\left(\frac{tilt(\mathtt{wt},\varphi)^2}{\eta(\eta - 6\varepsilon)^3}\right)$ samples for any input formula $\varphi$ and weight function $\mathtt{wt}$, where the tilde hides a poly logarithmic factor of $1/\delta, 1/\eta$ and
	$1/(\eta-6\varepsilon)$.
\end{restatable}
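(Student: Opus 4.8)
The plan is to bound the total sample count term by term. By inspecting Algorithm~\ref{algo:theory_new}, the samples ever drawn are the $t$ samples of $\Gamma_1$ (line~\ref{line:barsample1}), the $t$ samples of $\Gamma_2$ (line~\ref{line:barsample2}), and, in round $i$, the $N_i$ samples of $\Gamma_3$ (line~\ref{line:barsample3}), where $N_i$ is the value computed on line~\ref{line:ncalc}; the loop runs over at most $t$ rounds, since it may exit early with {\reject}. Hence the total is at most $2t + \sum_{i=1}^{t} N_i$, and it suffices to bound $t$, $n$, and each $N_i$.

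First I would bound $t$ and $n$. Since $\eta \le 2$ (it bounds an $\ell_1$ distance) and $\eta - 6\varepsilon \le \eta$, we have $\eta(\eta-6\varepsilon) \le 4 < 10$, so substituting $y = \eta(\eta-6\varepsilon)/10 \in (0,1)$ into the elementary inequality $\ln\!\bigl(1/(1-y)\bigr) \ge y$ yields $\ln\!\bigl(\tfrac{10}{10-\eta(\eta-6\varepsilon)}\bigr) \ge \tfrac{\eta(\eta-6\varepsilon)}{10}$, whence $t \le \tfrac{10\ln(1/\delta)}{\eta(\eta-6\varepsilon)}$. Then $n = 8\ln(t/\delta)$ is polylogarithmic in $1/\delta$, $1/\eta$ and $1/(\eta-6\varepsilon)$, so it is absorbed by the $\widetilde O(\cdot)$.

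The heart of the argument is bounding $N_i = n\cdot H/(H-L)^2$. Writing $\alpha = \mathtt{wt}(\sigma_1)/\mathtt{wt}(\sigma_2)$ for the ratio in round $i$ and substituting $L = \alpha\cdot lo/(1+\alpha\cdot lo)$, $H = \alpha\cdot hi/(1+\alpha\cdot hi)$, a short computation gives
\[
H-L = \frac{\alpha(hi-lo)}{(1+\alpha\cdot hi)(1+\alpha\cdot lo)}, \qquad
\frac{H}{(H-L)^2} = \frac{hi\,(1+\alpha\cdot hi)(1+\alpha\cdot lo)^2}{\alpha\,(hi-lo)^2}.
\]
For the denominator I would use $\varepsilon < \tfrac13$ to get $lo = (1+\varepsilon)/(1-\varepsilon) = 1 + 2\varepsilon/(1-\varepsilon) \le 1+3\varepsilon$, hence $hi - lo \ge (\eta+6\varepsilon)/4 - 3\varepsilon = (\eta-6\varepsilon)/4 > 0$. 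For the numerator, $\varepsilon < \tfrac13$ and $6\varepsilon < \eta \le 2$ give $hi \le 2$ and $lo \le 2$, and, crucially, $\alpha \in [1/tilt(\mathtt{wt},\varphi),\, tilt(\mathtt{wt},\varphi)]$ because $\sigma_1,\sigma_2 \in R_{\varphi\downarrow S}$. The step I expect to be the main obstacle is obtaining $tilt(\mathtt{wt},\varphi)^2$ here rather than a naive $tilt(\mathtt{wt},\varphi)^4$: I would split on $\alpha \ge 1$ versus $\alpha < 1$. When $\alpha \ge 1$, $(1+\alpha\cdot hi)(1+\alpha\cdot lo)^2/\alpha \le (1+2\alpha)^3/\alpha \le (3\alpha)^3/\alpha = 27\alpha^2$; when $\alpha < 1$, the same quantity is at most $27/\alpha$. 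Since $\tau := tilt(\mathtt{wt},\varphi) \ge 1$, both cases give a bound of $27\tau^2$, so $H/(H-L)^2 \le 2\cdot 27\cdot 16\,\tau^2/(\eta-6\varepsilon)^2$ and therefore $N_i = \widetilde O\!\bigl(\tau^2/(\eta-6\varepsilon)^2\bigr)$.

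Finally I would assemble the pieces: the total sample count is at most $2t + t\cdot\max_i N_i \le \widetilde O\!\bigl(1/(\eta(\eta-6\varepsilon))\bigr)\cdot\widetilde O\!\bigl(\tau^2/(\eta-6\varepsilon)^2\bigr) = \widetilde O\!\bigl(tilt(\mathtt{wt},\varphi)^2/(\eta(\eta-6\varepsilon)^3)\bigr)$, where the tilde absorbs the factors $\ln(1/\delta)$ and $\ln(t/\delta)$, which (using the bound on $t$ and the integrality rounding of $N_i$ contributing only an additive $t$) are polylogarithmic in $1/\delta$, $1/\eta$ and $1/(\eta-6\varepsilon)$, exactly as claimed.
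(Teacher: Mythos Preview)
Your proposal is correct and follows essentially the same route as the paper: bound $t$ via the elementary logarithm inequality, expand $H/(H-L)^2$ algebraically into $hi(1+\alpha\,hi)(1+\alpha\,lo)^2/(\alpha(hi-lo)^2)$, use $1<lo<hi<2$ and $hi-lo\ge(\eta-6\varepsilon)/4$, bound the $\alpha$-dependent factor by $tilt^2$, and combine. Your treatment is in fact more careful than the paper's in one respect: the paper only invokes $\alpha\le tilt(\mathtt{wt},\varphi)$, whereas you correctly note $\alpha\in[1/tilt,\,tilt]$ and split on $\alpha\ge 1$ versus $\alpha<1$ to control $(1+2\alpha)^3/\alpha$ by $27\,tilt^2$ in both cases; without the lower bound on $\alpha$ that factor is unbounded as $\alpha\to 0$, so your case split is the right way to close the argument.
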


We will present the proofs of  Lemma~\ref{thm:complete}, Lemma~\ref{thm:sound} and Lemma~\ref{thm:query} in Section~\ref{sec:complete}, Section~\ref{sec:sound} and Section~\ref{sec:query} respectively.

In the rest of this section we will use the following notations: 
\begin{itemize}
\item We use $\mathbbm{1}(E)$ to represent the indicator variable for the event $E$. 
\item We use  $R_i$ to denote the event that {\barbarik} returns {\reject} in iteration $i$. 
\end{itemize}

For the proof of correctness of our algorithm, we need some standard concentration inequalities. The following versions of the Chernoff Bound will be used. 

\begin{lemma}\label{lem:relaxed_Chernoff}
	Let $Y_1, Y_2, \dots, Y_n$ be $i.i.d$  0-1 random variables.
	\begin{enumerate}
		\item If $\E[Y_i] \geq \theta \geq 0$, then for any $ t \leq \theta$,
		\begin{align*}
		\Pr	\left[  \sum_{j\in [n]} \frac{Y_j}{n}  \leq t  \right] < exp\left(-\frac{(\theta -t)^2n}{2\theta}\right)
		\end{align*}
		\item 
		If $\E[Y_i] \leq \theta $, then for any $t \geq \theta$,
		\begin{align*}
		\Pr	\left[  \sum_{j\in [n]} \frac{Y_j}{n}  \geq t  \right] <exp\left(-\frac{(t-\theta)^2n}{2t}\right)
		\end{align*}
	\end{enumerate}
\end{lemma}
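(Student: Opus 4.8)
\textbf{Proof strategy for Lemma~\ref{lem:relaxed_Chernoff}.}
The plan is to derive both inequalities from the standard multiplicative Chernoff bound for sums of i.i.d.\ $0$--$1$ random variables, by reducing the ``relaxed'' hypothesis (only a one-sided bound $\E[Y_i]\ge\theta$ or $\E[Y_i]\le\theta$, rather than equality) to the exact-mean case via a monotonicity / coupling argument. Throughout, write $\mu=\E[Y_i]$ and $S_n=\sum_{j\in[n]}Y_j$, so $\E[S_n]=n\mu$.

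For part~(1): since $\mu\ge\theta$, I would first argue that the lower tail $\Pr[S_n/n\le t]$ is maximized (over all values of $\mu$ consistent with the hypothesis) when $\mu=\theta$. This is intuitively clear, and can be made rigorous by coupling: one can write each $Y_j$ with mean $\mu$ as $\max(Y_j',Z_j)$ where $Y_j'$ has mean $\theta$ and $Z_j$ is an independent Bernoulli chosen so that the composite has mean $\mu$; then $S_n\ge S_n'$ pointwise, hence $\Pr[S_n/n\le t]\le\Pr[S_n'/n\le t]$. So it suffices to prove the bound assuming $\mu=\theta$ exactly. Now invoke the classical lower-tail Chernoff bound: for $Y_j'$ i.i.d.\ Bernoulli$(\theta)$ and any $0<\gamma<1$,
\begin{align*}
\Pr\left[S_n'\le(1-\gamma)\theta n\right]\le\exp\left(-\frac{\gamma^2\theta n}{2}\right).
\end{align*}
Setting $t=(1-\gamma)\theta$, i.e.\ $\gamma=(\theta-t)/\theta$ (valid since $0\le t\le\theta$), the exponent becomes $-\frac{(\theta-t)^2 n}{2\theta}$, which is exactly the claimed bound. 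The condition $t\le\theta$ in the statement guarantees $\gamma\ge 0$; for $t=\theta$ the bound is trivially $<1$, and for $t<0$ it is vacuous.

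For part~(2): symmetrically, since $\mu\le\theta$, the upper tail $\Pr[S_n/n\ge t]$ is maximized when $\mu=\theta$ (same coupling argument, now with $S_n\le S_n''$ where $S_n''$ has mean $n\theta$). So assume $\mu=\theta$. I would use the upper-tail Chernoff bound in the form
\begin{align*}
\Pr\left[S_n\ge(1+\gamma)\theta n\right]\le\exp\left(-\frac{\gamma^2\theta n}{2+\gamma}\right),
\end{align*}
and then observe that with $t=(1+\gamma)\theta$ we have $\gamma=(t-\theta)/\theta$ and $2+\gamma=(2\theta+t-\theta)/\theta=(t+\theta)/\theta\le 2t/\theta$ (using $\theta\le t$), so that
\begin{align*}
\frac{\gamma^2\theta n}{2+\gamma}=\frac{(t-\theta)^2 n/\theta}{(t+\theta)/\theta}=\frac{(t-\theta)^2 n}{t+\theta}\ge\frac{(t-\theta)^2 n}{2t},
\end{align*}
which yields the stated bound. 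The hypothesis $t\ge\theta$ is used precisely in the last inequality.

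\textbf{Main obstacle.} The only genuinely delicate point is justifying the reduction from the inequality hypotheses on $\E[Y_i]$ to the equality case — i.e.\ that the tail probability is monotone in the mean of a Bernoulli sum in the appropriate direction. I would handle this cleanly via the explicit coupling sketched above (each Bernoulli of larger mean stochastically dominates one of smaller mean, and stochastic domination is preserved under independent sums), so that the worst case is $\E[Y_i]=\theta$; after that, everything is a bookkeeping exercise with the textbook Chernoff exponents and the elementary inequalities $\gamma=(\theta-t)/\theta$ and $(t+\theta)/\theta\le 2t/\theta$. An alternative, if one prefers to avoid coupling, is to carry the inequality through the moment-generating-function argument directly: $\E[e^{-\lambda S_n}]=(\mu e^{-\lambda}+1-\mu)^n$ is increasing in $\mu$ for $\lambda>0$ only when $e^{-\lambda}<1$, so $\le(\theta e^{-\lambda}+1-\theta)^n$ when $\mu\ge\theta$, and symmetrically for the other tail — this gives the reduction without mentioning couplings.
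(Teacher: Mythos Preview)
Your derivation is correct. The paper, however, does not prove Lemma~\ref{lem:relaxed_Chernoff} at all: it is introduced with the sentence ``For the proof of correctness of our algorithm, we need some standard concentration inequalities. The following versions of the Chernoff Bound will be used,'' and is then simply stated and invoked later. So there is no proof in the paper to compare against; the authors treat these as textbook facts.

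Your route --- reduce the one-sided mean hypothesis to the exact-mean case via stochastic domination of Bernoullis, then plug into the standard multiplicative lower-tail bound $\exp(-\gamma^2\theta n/2)$ and upper-tail bound $\exp(-\gamma^2\theta n/(2+\gamma))$, followed by the elementary inequality $t+\theta\le 2t$ --- is exactly how one would supply the missing justification, and all the algebra checks out. One minor slip in your alternative MGF remark: the function $\mu\mapsto \mu e^{-\lambda}+1-\mu$ is \emph{decreasing} in $\mu$ for $\lambda>0$ (its derivative is $e^{-\lambda}-1<0$), not increasing; but your conclusion that $\E[e^{-\lambda S_n}]\le(\theta e^{-\lambda}+1-\theta)^n$ when $\mu\ge\theta$ is correct precisely because of this monotonicity, so the argument still goes through.
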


We are now ready to present the proofs of Lemma~\ref{thm:complete}, Lemma~\ref{thm:sound} and Lemma~\ref{thm:query}.

\subsection{Proof of Lemma~\ref{thm:complete}}\label{sec:complete}
\complete*
 For the proof of Lemma~\ref{thm:complete} we will firstly show (in Lemma~\ref{lem:per_iteration_error}) that in each iteration of the loop, the probability that {\barbarik} returns {\reject} is less than $\delta/t$ and then the proof of  Lemma~\ref{thm:complete}  follows by the application of the Chernoff Bound. Recall that $R_i$ denotes the event that {\barbarik} returns {\reject} in iteration $i$.

\begin{restatable}[]{lemma}{periterationerror}\label{lem:per_iteration_error}
	If sampler $\mathcal{G}$ is $\varepsilon$-close to an ideal sampler $\mathcal{A}$ then the probability that {\barbarik} returns {\reject} in any particular iteration of the loop,  is atmost $\delta/t$. Then
	\begin{align}
	\Pr\left[ \overline{R_i}\mid \bigwedge_{j \in [i-1]} \overline{R_j}\right]  \geq \left(1 - \frac{\delta}{t}\right)
	\nonumber 
	\end{align}
\end{restatable}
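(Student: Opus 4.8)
The plan is to fix an iteration $i$, condition on the two samples $\sigma_1 = \Gamma_1[i]$ and $\sigma_2 = \Gamma_2[i]$ drawn on line~\ref{line:one_sample}, and bound the probability of rejecting in that iteration; the passage from this per-iteration bound to the stated conditional form $\Pr[\overline{R_i}\mid\bigwedge_{j\in[i-1]}\overline{R_j}]\ge 1-\delta/t$ is then immediate, since $\Gamma_1,\Gamma_2$ are sampled once before the loop and $\Gamma_3$ is redrawn in each round, so the randomness determining $R_i$ is independent of the earlier events $\overline{R_j}$. If $\sigma_1=\sigma_2$ the loop executes $\mathbf{continue}$ and no rejection can occur, so I assume $\sigma_1\neq\sigma_2$; then $\alpha=\mathtt{wt}(\sigma_1)/\mathtt{wt}(\sigma_2)$ and hence $L,H,T,N$ and $\hat\varphi$ are all determined. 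Writing $q := p_{\mathcal{G}(\hat\varphi,S)}(\sigma_1)$, the quantity $Bias$ computed on line~\ref{line:barbias1} is the empirical mean of $N$ i.i.d.\ Bernoulli$(q)$ variables, so the whole argument reduces to (i) showing $q\le L$ and (ii) a Chernoff tail bound.

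For step (i), the subroutine {\fulcrumkernel} guarantees (Definition~\ref{def:kernel}) that $R_{\hat\varphi_{\downarrow S}}=\{\sigma_1,\sigma_2\}$, so $p_{\mathcal{G}(\hat\varphi,S)}(\sigma_1)+p_{\mathcal{G}(\hat\varphi,S)}(\sigma_2)=1$, and likewise for $\mathcal{A}$. Since $\mathcal{A}$ is an ideal sampler w.r.t.\ $\mathtt{wt}$ (Definition~\ref{def:ideal_weighted}), $p_{\mathcal{A}(\hat\varphi,S)}(\sigma_1)/p_{\mathcal{A}(\hat\varphi,S)}(\sigma_2)=\mathtt{wt}(\sigma_1)/\mathtt{wt}(\sigma_2)=\alpha$. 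Because $\mathcal{G}$ is $\varepsilon$-close to $\mathcal{A}$ on \emph{every} formula, in particular on $\hat\varphi$, and multiplicative closeness on full assignments passes to the marginal on $S$ by summing the bounds over the fibers of $\sigma_1$ and $\sigma_2$, we get $p_{\mathcal{G}(\hat\varphi,S)}(\sigma_1)\le(1+\varepsilon)p_{\mathcal{A}(\hat\varphi,S)}(\sigma_1)$ and $p_{\mathcal{G}(\hat\varphi,S)}(\sigma_2)\ge(1-\varepsilon)p_{\mathcal{A}(\hat\varphi,S)}(\sigma_2)$. Dividing, $\frac{q}{1-q}\le\frac{1+\varepsilon}{1-\varepsilon}\,\alpha=\alpha\cdot lo$, and since $x\mapsto x/(1+x)$ is increasing this gives $q\le \frac{\alpha\cdot lo}{1+\alpha\cdot lo}=L$, exactly the quantity $L$ from the algorithm. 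Note that this step uses only $\varepsilon$-closeness of $\mathcal{G}$ and not subquery consistency, which explains why {\accept} needs no such assumption.

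For step (ii), I first check $H>L$: since $x\mapsto x/(1+x)$ is increasing and $\alpha>0$, $H>L\iff hi>lo$, and $hi=1+(\eta+6\varepsilon)/4>1+3\varepsilon$ (using $\eta>6\varepsilon$) while $lo=(1+\varepsilon)/(1-\varepsilon)<1+3\varepsilon$ for $\varepsilon<\tfrac13$ (equivalently $\varepsilon(1-3\varepsilon)>0$). Hence $L<T=(L+H)/2<H$. Applying Lemma~\ref{lem:relaxed_Chernoff}(2) with the $N$ i.i.d.\ indicators, $\theta=L\ge q=\E[Y_j]$, and threshold $T\ge\theta$, I obtain $\Pr[Bias\ge T]<\exp\!\big(-(T-L)^2 N/(2T)\big)$. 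Now $T-L=(H-L)/2$ and $2T=H+L\le 2H$, so the exponent is at least $\frac{(H-L)^2 N}{8H}=\frac{n}{8}=\ln(t/\delta)$ by the choice $N=n\,H/(H-L)^2$ on line~\ref{line:ncalc} and $n=8\ln(t/\delta)$. Therefore the probability of returning {\reject} in iteration $i$ is at most $\Pr[Bias\ge T]<\delta/t$; removing the conditioning on $\sigma_1\neq\sigma_2$ (rejection being impossible otherwise) and on the earlier iterations (by the independence noted above) yields the claim.

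The one genuinely non-mechanical point is step (i): one must recognize that $L$ was defined precisely so that every $\varepsilon$-close sampler satisfies $q\le L$, and that the clean way to see this is to control the odds ratio $q/(1-q)$ — on which multiplicative closeness composes multiplicatively — rather than $q$ directly. Everything afterward is bookkeeping: the inequality $hi>lo$, the Chernoff tail, and propagating the constant through the definition of $N$; the only mild subtlety is the reduction of $\varepsilon$-closeness on full assignments to closeness of the $S$-marginal, which is handled by summing the multiplicative bounds over fibers.
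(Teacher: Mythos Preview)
Your proof is correct and follows the same overall plan as the paper: show that the mean $q=p_{\mathcal{G}}(\hat\varphi,S,\sigma_1)$ satisfies $q\le L$, then apply the Chernoff bound of Lemma~\ref{lem:relaxed_Chernoff} with $N=nH/(H-L)^2$ to conclude $\Pr[Bias\ge T]\le\delta/t$.

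The one place you genuinely differ is in establishing $q\le L$. The paper bounds $q\le(1+\varepsilon)p_{\mathcal{A}}(\hat\varphi,S,\sigma_1)$ and $1-q\ge(1-\varepsilon)p_{\mathcal{A}}(\hat\varphi,S,\sigma_2)$ separately, and then needs a case split on whether $p_{\mathcal{A}}(\varphi,S,\sigma_1)\le p_{\mathcal{A}}(\varphi,S,\sigma_2)$ or not to push each of these to the target form $L$. Your odds-ratio argument, dividing the two bounds to get $q/(1-q)\le\alpha\cdot lo$ and then using monotonicity of $x\mapsto x/(1+x)$, reaches $q\le L$ in one line and makes the case distinction unnecessary; this is more elegant and arguably the ``right'' way to see why $L$ was defined as it is. In the Chernoff step you use the denominator $2T$ exactly as stated in Lemma~\ref{lem:relaxed_Chernoff}(2), whereas the paper writes $2L$ there; both are at most $2H$, so the final bound $\exp(-(H-L)^2N/(8H))\le\delta/t$ goes through either way. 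Your verification of $hi>lo$ (hence $H>L$) via $hi>1+3\varepsilon>lo$ for $\varepsilon<1/3$, $\eta>6\varepsilon$ is also a bit more explicit than the paper's footnote.
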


\begin{proof}{(of Lemma~\ref{lem:per_iteration_error})}
	{\barbarik} returns {\reject} in the $i$th iteration if the $Bias$ (in the $i$th iteration) is more than $T$, where $T = \frac{L + H}{2}$ with $$L = \frac{(1+\varepsilon)p_{\mathcal{A}}(\varphi, S, \sigma_1) }{(1+\varepsilon)p_{\mathcal{A}}(\varphi, S, \sigma_1) + (1-\varepsilon)p_{\mathcal{A}}(\varphi, S, \sigma_2)}$$
	And since, by definition, all the elements in $\Gamma_1$, $\Gamma_2$ and $\Gamma_3$ are obtained by drawing independent samples from $\mathcal{D}_{\mathcal{G}(\varphi)}$, $\mathcal{D}_{\mathcal{A}(\varphi)}$ and  $\mathcal{D}_{\mathcal{G}(\hat{\varphi})}$ respectively so
	\begin{align}
	\Pr\left[\overline{R_i}\mid \bigwedge_{j \in [i-1]} \overline{R_j}\right] & = \Pr\left[\mbox{ $Bias \leq T$ in the $i$th iteration}\right] \nonumber \\
	& = 1 -  \Pr\left[\mbox{ $Bias > T$ in the $i$th iteration}\right] \nonumber \\
	& = 1 - \Pr\left[\sum_{j \in [N]}\frac{\mathbbm{1}(\Gamma_3[j]_{\downarrow S} = \sigma_1)}{N} > T \right] \nonumber
	\end{align}
	Note that the random variables $\mathbbm{1}(\Gamma_3[j]_{\downarrow S} = \sigma_1)$ are an i.i.d 0-1 random variable. And since the sampler $\mathcal{G}$ is assumed to be $\varepsilon$-close to the ideal sampler so we have
	$$ (1 - \varepsilon) p_{\mathcal{A}}(\hat{\varphi}, \Gamma_3[j]) \leq  p_{\mathcal{G}}(\hat{\varphi}, \Gamma_3[j]) \leq (1 + \varepsilon) p_{\mathcal{A}}(\hat{\varphi}, \Gamma_3[j]).$$
	Thus we have,
	\begin{align}
	\E[\mathbbm{1} (\Gamma_3[j]_{\downarrow S} = \sigma_1)] = p_{\mathcal{G}}(\hat{\varphi}, S, \sigma_1)  \leq (1 + \varepsilon)p_{\mathcal{A}}(\hat{\varphi}, S, \sigma_1)  \nonumber
	\end{align}
	Now, since $p_{\mathcal{A}}(\hat{\varphi}, S, \sigma_1) = \frac{p_{\mathcal{A}}(\varphi, S, \sigma_1)}{p_{\mathcal{A}}(\varphi, S, \sigma_1) + p_{\mathcal{A}}(\varphi, S, \sigma_2)}$ we have
	
	\begin{align}\label{eq:complete1}
	\E[\mathbbm{1} (\Gamma_3[j]_{\downarrow S} = \sigma_1)] = p_{\mathcal{G}}(\hat{\varphi}, S, \sigma_1) &\leq  \frac{(1+\varepsilon)p_{\mathcal{A}}(\varphi, S, \sigma_1)}{p_{\mathcal{A}}(\varphi, S, \sigma_1)+p_{\mathcal{A}}(\varphi, S, \sigma_2)}
	\end{align}
	Similarly, we have that
	\begin{align}\label{eq:complete2}
	\E[\mathbbm{1} (\Gamma_3[j]_{\downarrow S} = \sigma_2)] = p_{\mathcal{G}}(\hat{\varphi}, S, \sigma_2) &\geq \frac{(1-\varepsilon) p_{\mathcal{A}}(\varphi, S, \sigma_2)}{p_{\mathcal{A}}(\varphi, S, \sigma_1)+p_{\mathcal{A}}(\varphi, S, \sigma_2)}
	\end{align}
	Now we consider two cases depending on whether $p_{\mathcal{A}}(\varphi, S, \sigma_1) $ is greater or lesser than $ p_{\mathcal{A}}(\varphi, S, \sigma_2)$.
	If $p_{\mathcal{A}}(\varphi, S, \sigma_1) \leq  p_{\mathcal{A}}(\varphi, S, \sigma_2) $ then from Equation~\ref{eq:complete1} we have
	\begin{align}
	\E[\mathbbm{1} (\Gamma_3[j]_{\downarrow S} = \sigma_1)] & = p_{\mathcal{A}}(\hat{\varphi}, S, \sigma_1) \nonumber\\
	&\leq  \frac{(1+\varepsilon)p_{\mathcal{A}}(\varphi, S, \sigma_1)}{p_{\mathcal{A}}(\varphi, S, \sigma_1)+p_{\mathcal{A}}(\varphi, S, \sigma_2)} \nonumber\\
	& \leq \frac{(1+\varepsilon)p_{\mathcal{A}}(\varphi, S, \sigma_1)}{(1+\varepsilon)p_{\mathcal{A}}(\varphi, S, \sigma_1) + (1-\varepsilon)p_{\mathcal{A}}(\varphi, S, \sigma_2)} = L \label{eq:L_upperbound}
	\end{align}
	
	But if $p_{\mathcal{A}}(\varphi, S, \sigma_1) \geq p_{\mathcal{A}}(\varphi, S, \sigma_2) $ then from Equation~\ref{eq:complete1} we have
	
	\begin{align}
	\E[\mathbbm{1} (\Gamma_3[j]_{\downarrow S} = \sigma_2)] & = p_{\mathcal{A}}(\hat{\varphi}, S, \sigma_2)\nonumber\\
	&\geq \frac{(1-\varepsilon) p_{\mathcal{A}}(\varphi, S, \sigma_2)}{p_{\mathcal{A}}(\varphi, S, \sigma_1)+p_{\mathcal{A}}(\varphi, S, \sigma_2)}\nonumber\\
	& \geq \frac{(1-\varepsilon)p_{\mathcal{A}}(\varphi, S, \sigma_2)}{(1+\varepsilon)p_{\mathcal{A}}(\varphi, S, \sigma_1) + (1-\varepsilon)p_{\mathcal{A}}(\varphi, S, \sigma_2)}\nonumber
	\end{align}
	
	And in that case since $p_{\mathcal{A}}(\hat{\varphi}, S, \sigma_1) + p_{\mathcal{A}}(\hat{\varphi}, S, \sigma_2) = 1$ we have
	
	\begin{align}
	\E[\mathbbm{1} (\Gamma_3[j]_{\downarrow S} = \sigma_1)] & = p_{\mathcal{A}}(\hat{\varphi}, S, \sigma_1) \nonumber\\
	& = 1 - p_{\mathcal{A}}(\hat{\varphi}, S, \sigma_2) \nonumber\\
	& \leq 1 - \left(\frac{(1-\varepsilon)p_{\mathcal{A}}(\varphi, S, \sigma_2)}{(1+\varepsilon)p_{\mathcal{A}}(\varphi, S, \sigma_1) + (1-\varepsilon)p_{\mathcal{A}}(\varphi, S, \sigma_2)}\right) \nonumber\\
	& \leq  \frac{(1+\varepsilon)p_{\mathcal{A}}(\varphi, S, \sigma_1)}{(1+\varepsilon)p_{\mathcal{A}}(\varphi, S, \sigma_1) + (1-\varepsilon)p_{\mathcal{A}}(\varphi, S, \sigma_2)} = L\label{eq:L_lowerbound}
	\end{align}
	
	Thus in either case, from Equation (\ref{eq:L_upperbound}) and Equation (\ref{eq:L_lowerbound}) we have $\E[\mathbbm{1} (\Gamma_3[j]_{\downarrow S} = \sigma_1)] \leq L$.
	Now applying the Chernoff bound from Lemma \ref{lem:relaxed_Chernoff} we have
	\begin{align}
	\Pr\left[Bias \geq T \right] & = \Pr\left[\sum_{j \in [N]}\frac{\mathbbm{1}(\Gamma_3[j]_{\downarrow S} = \sigma_1)}{N} > T \right]\nonumber\\
	& = exp\left(-\frac{(T-L)^2N}{2L}\right)  = exp\left(-\frac{(H-L)^2N}{8L}\right) \nonumber\\
	& \leq exp\left(-\frac{(H-L)^2N}{8H}\right) \quad \mbox{ Because [$H \geq L$}] \label{line:HL}\\
	& \leq \frac{\delta}{t} \label{eq:delta_by_t},
	\end{align}
	where the inequality in line (\ref{line:HL}) follows because  $H \geq L$ when $\varepsilon\leq 1/3$ and $\eta\geq 6\varepsilon$\footnote{$H\geq L$ if $hi\geq lo$ that is $(6\varepsilon +\eta)/4 \geq (2\varepsilon)/(1-\varepsilon)$}
	and last inequality follows because $N = n.H/(H-L)^2$ where $n = 8\log (t/\delta)$.
\end{proof}

\begin{proof}{(of Lemma~\ref{thm:complete})}
	Let $R_i$ denote the event that {\barbarik} returns {\reject} in iteration $i$
	and $\overline{R}$ denote the event that {\barbarik} returns {\accept}. Thus $\overline{R} = \cap_{i}\overline{R_i}$.
	
	In the $i^{th}$ iteration if the bias is less than the threshold, {\barbarik} fails to {\reject}. Thus from Lemma~\ref{lem:per_iteration_error} if the sampler $\mathcal{G}$ is $\varepsilon$-close to the ideal sampler $\mathcal{A}$ then
	\begin{align}
	\Pr\left[ \overline{R_i} {\mid} \bigwedge_{j \in [i-1]} \overline{R_j}\right]  \geq 1-\frac{\delta}{t}\nonumber	
	\end{align}
	If {\barbarik} has not returned {\reject} in any of the iteration then after the last iteration {\barbarik} returns {\accept}. The probability of {\barbarik} returning {\accept} (event $\overline{R}$) is
	\begin{align}
	\Pr\left[\overline{R}\right] & \geq  \prod_{i\in[t]}\Pr\left[\overline{R_i} {\mid} \bigwedge_{j \in [i-1]}\overline{R_j}\right]\nonumber
	\geq \left(1- \frac{\delta}{t}\right)^t \geq 1-\delta \nonumber
	\end{align}
\end{proof}

\subsection{Proof of Lemma~\ref{thm:sound}}\label{sec:sound}
\sound*
\begin{proof}
	To prove the Lemma, we will start by splitting the set $R_{\varphi}$ into disjoint subsets depending on the distribution $D_{\mathcal{G}(\varphi)}$.
	
	\begin{definition}
		We define the following sets for use in the soundness proof:
		\begin{itemize}
			\item $D = \{x \in R_{\varphi} :\ p_{\mathcal{G}}(\varphi, x) \leq p_{\mathcal{A}}(\varphi, x)\}$
			\item $U = R_\varphi\setminus D$
			\item $U_0 = \{x \in R_{\varphi} :\ p_{\mathcal{A}}(\varphi, x) < p_{\mathcal{G}}(\varphi, x) \leq \left( 1 + \frac{\eta + 6\varepsilon}{4}\right)p_{\mathcal{A}}(\varphi, x) \} $. 	
			\item $ U_1 = \{x \in R_{\varphi} :\   \left( 1 + \frac{\eta + 6\varepsilon}{4}\right)p_{\mathcal{A}}(\varphi, x) < p_{\mathcal{G}}(\varphi, x)  \} $
		\end{itemize}
	\end{definition}

	Recall, $R_i$ is the event that {\barbarik} returns {\reject} in the $i$th iteration of the for loop. Then the following lemmas helps us to lower bound the probability of $\Gamma_1[i] \in U_1 \wedge \Gamma_2[i] \in D$ and the probability of $R_i$ under the condition that $\Gamma_1[i] \in U_1 \wedge \Gamma_2[i] \in D$.

	\begin{restatable}[]{lemma}{probofaccept}\label{lem:probofaccept}
		If the sampler $\mathcal{G}$ is $\eta$-far from the ideal sampler then
	$$\Pr\left[ R_i {\mid}  (\bigwedge_{j \in [i-1]}\overline{R_j})\wedge (\Gamma_1[i] \in U_1 \wedge \Gamma_2[i] \in D)\right] \geq \frac{4}{5}.$$
\end{restatable}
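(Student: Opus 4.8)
The plan is to reduce the claim to a one-sided concentration bound for the empirical quantity $Bias$ computed on line~\ref{line:barbias1}, conditioned on the ``good pair'' event $\sigma_1:=\Gamma_1[i]\in U_1$, $\sigma_2:=\Gamma_2[i]\in D$. First I would dispose of the conditioning on $\bigwedge_{j\in[i-1]}\overline{R_j}$. The list $\Gamma_1$ (resp.\ $\Gamma_2$) consists of independent draws, so $\Gamma_1[i],\Gamma_2[i]$ are independent of $\Gamma_1[j],\Gamma_2[j]$ for $j<i$; the fresh list $\Gamma_3$ and the random literal $l$ picked by {\fulcrumkernel} in iteration $i$ are likewise drawn independently of everything used in iterations $1,\dots,i-1$; and, crucially, by subquery consistency (Definition~\ref{def:nonadversarial}, which is in force throughout the proof of Lemma~\ref{thm:sound}) the conditional law of $\Gamma_3$ in iteration $i$ given $(\sigma_1,\sigma_2)$ does not depend on $l$. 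Hence $\overline{R_1},\dots,\overline{R_{i-1}}$ are independent of the random ingredients of iteration $i$, so $\Pr[R_i\mid \bigwedge_{j<i}\overline{R_j}\wedge \sigma_1\in U_1\wedge\sigma_2\in D]=\Pr[R_i\mid \sigma_1\in U_1\wedge\sigma_2\in D]$, and since $R_i=\{Bias>T\}$ with $Bias=\frac1N\sum_{j\in[N]}\mathbbm{1}(\Gamma_3[j]_{\downarrow S}=\sigma_1)$, it suffices to lower bound the latter, a statement about an average of $N$ i.i.d.\ $0$–$1$ variables.

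Next I would compute the mean of those variables. By subquery consistency each $\Gamma_3[j]$ is distributed as $\mathcal{D}_{\mathcal{G}(\varphi)\mid\{\sigma_1,\sigma_2\}}$, so $\E[\mathbbm{1}(\Gamma_3[j]_{\downarrow S}=\sigma_1)]=p$ where $p=\frac{p_{\mathcal{G}}(\varphi,\sigma_1)}{p_{\mathcal{G}}(\varphi,\sigma_1)+p_{\mathcal{G}}(\varphi,\sigma_2)}$. Now I feed in the two halves of the good-pair event: $\sigma_1\in U_1$ gives $p_{\mathcal{G}}(\varphi,\sigma_1)>\bigl(1+\tfrac{\eta+6\varepsilon}{4}\bigr)p_{\mathcal{A}}(\varphi,\sigma_1)$, and $\sigma_2\in D$ gives $p_{\mathcal{G}}(\varphi,\sigma_2)\le p_{\mathcal{A}}(\varphi,\sigma_2)$. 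Since $x\mapsto\frac{x}{x+y}$ is increasing in $x$ and decreasing in $y$, substituting these bounds and then dividing numerator and denominator by $p_{\mathcal{A}}(\varphi,\sigma_2)$ — using that for the ideal sampler $p_{\mathcal{A}}(\varphi,\sigma_1)/p_{\mathcal{A}}(\varphi,\sigma_2)=\wt{\sigma_1}/\wt{\sigma_2}=\alpha$ (Definition~\ref{def:ideal_weighted}, $\alpha$ as on line~\ref{line:alpha}) — yields $p>\frac{\alpha\cdot hi}{1+\alpha\cdot hi}=H$, where $hi=1+\tfrac{\eta+6\varepsilon}{4}$. Finally, the hypotheses $\varepsilon<\tfrac13$ and $\eta>6\varepsilon$ force $hi>lo$ and hence $H>L$ (the same inequality invoked in the completeness proof, which is also what makes $N$ well defined), so $T=\tfrac{H+L}{2}<H<p$; in particular the mean exceeds the threshold by at least $H-T=\tfrac{H-L}{2}$.

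The last step mirrors the completeness argument. I would apply Lemma~\ref{lem:relaxed_Chernoff}(1) with $\theta=H\le p$ and $t'=T\le H$ to obtain
\[
\Pr[Bias\le T] < \exp\!\Bigl(-\tfrac{(H-T)^2N}{2H}\Bigr) = \exp\!\Bigl(-\tfrac{(H-L)^2N}{8H}\Bigr),
\]
and then substitute $N=n\cdot H/(H-L)^2$ with $n=8\ln(t/\delta)$ (line~\ref{line:ncalc} and the definition of $n$) to get $\Pr[Bias\le T]<\delta/t\le\tfrac15$ (the last step using $t\ge1$ and that we may assume $\delta\le\tfrac15$, as shrinking $\delta$ only strengthens the guarantee). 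Therefore $\Pr[R_i\mid \sigma_1\in U_1\wedge\sigma_2\in D]=1-\Pr[Bias\le T\mid\cdots]\ge\tfrac45$, which by the first paragraph is exactly the asserted bound.

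I expect the main obstacle to be the mean-identification step: it is the one place where the kernel construction, the subquery-consistency assumption, and the precise constants $hi$, $H$, $L$, $T$ chosen by the algorithm must interlock exactly. One must be careful to use \emph{both} halves of the good-pair event — $\sigma_1\in U_1$ to push the numerator up and $\sigma_2\in D$ to push the cross term in the denominator down — together with the ideal-sampler weight-ratio identity, so that the resulting lower bound on $p$ is precisely $H$ rather than merely something above $T$; only with this exact bound does the Chernoff step close using the same sample budget $N$ that the algorithm actually allocates. By contrast, the decoupling of the earlier-iteration conditioning and the concentration computation itself are routine, the latter being essentially a rerun of the algebra in the proof of Lemma~\ref{lem:per_iteration_error}.
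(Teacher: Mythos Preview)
Your proposal is correct and follows essentially the same route as the paper's proof: both drop the conditioning on earlier iterations by independence, use subquery consistency to identify $\E[\mathbbm{1}(\Gamma_3[j]_{\downarrow S}=\sigma_1)]$ with $p_{\mathcal{G}}(\varphi,\sigma_1)/(p_{\mathcal{G}}(\varphi,\sigma_1)+p_{\mathcal{G}}(\varphi,\sigma_2))$, feed in the $U_1$/$D$ inequalities to lower bound this by $H$, and then apply Lemma~\ref{lem:relaxed_Chernoff}(1) with the substitution $N=nH/(H-L)^2$ to obtain $\Pr[Bias\le T]\le \delta/t$. The only cosmetic divergence is in the last step: the paper closes with ``$\delta<0.5$ and $t\ge 3$'' whereas you argue ``$t\ge 1$ and we may assume $\delta\le\tfrac15$''; neither justification is fully airtight as stated, but both capture the intended bound and the substance of the argument is identical.
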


\begin{restatable}[]{lemma}{rounds}\label{lem:rounds}
	If the sampler $\mathcal{G}$ is $\eta$-far from the ideal sampler on input $\varphi$ then $$\Pr \left[\Gamma_1[i] \in U_1 \wedge \Gamma_2[i] \in D \right]  \geq \frac{\eta(\eta-6\varepsilon)}{8}.$$
\end{restatable}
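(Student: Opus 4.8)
The plan is to exploit the independence of $\Gamma_1[i]$ and $\Gamma_2[i]$: since $\Gamma_1 = \mathcal{G}(\varphi,S,t)$ is a list of $t$ independent samples from $\mathcal{D}_{\mathcal{G}(\varphi)}$ and $\Gamma_2 = \mathcal{A}(\varphi,S,t)$ is an independent list of $t$ independent samples from $\mathcal{D}_{\mathcal{A}(\varphi)}$, the two events $\Gamma_1[i]\in U_1$ and $\Gamma_2[i]\in D$ are independent, so
\begin{align*}
\Pr\left[\Gamma_1[i]\in U_1 \wedge \Gamma_2[i]\in D\right] = p_{\mathcal{G}}(\varphi,U_1)\cdot p_{\mathcal{A}}(\varphi,D),
\end{align*}
where I abbreviate $p_{\mathcal{G}}(\varphi,U_1) = \sum_{\sigma\in U_1}p_{\mathcal{G}}(\varphi,\sigma)$ and likewise $p_{\mathcal{A}}(\varphi,D) = \sum_{\sigma\in D}p_{\mathcal{A}}(\varphi,\sigma)$. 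It then suffices to prove the two bounds $p_{\mathcal{A}}(\varphi,D)\geq \eta/2$ and $p_{\mathcal{G}}(\varphi,U_1)\geq (\eta-6\varepsilon)/4$, and multiply them.

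For the first bound, I would write the $\ell_1$ distance as $\sum_{\sigma\in D}(p_{\mathcal{A}}(\varphi,\sigma)-p_{\mathcal{G}}(\varphi,\sigma)) + \sum_{\sigma\in U}(p_{\mathcal{G}}(\varphi,\sigma)-p_{\mathcal{A}}(\varphi,\sigma))$; since $\sum_\sigma p_{\mathcal{A}}(\varphi,\sigma) = \sum_\sigma p_{\mathcal{G}}(\varphi,\sigma) = 1$, these two nonnegative sums are in fact equal, so each is at least $\eta/2$ by the $\eta$-farness hypothesis. As $p_{\mathcal{G}}(\varphi,\sigma)\geq 0$, dropping it gives $p_{\mathcal{A}}(\varphi,D) \geq \sum_{\sigma\in D}(p_{\mathcal{A}}(\varphi,\sigma)-p_{\mathcal{G}}(\varphi,\sigma)) \geq \eta/2$. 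For the second bound I would split $U = U_0\cup U_1$ and track the ``excess mass'' of $\mathcal{G}$ over $\mathcal{A}$ on each piece. On $U$ this excess is $\geq \eta/2$ by the previous step. On $U_0$, the defining inequality gives $p_{\mathcal{G}}(\varphi,\sigma)-p_{\mathcal{A}}(\varphi,\sigma) \leq \frac{\eta+6\varepsilon}{4}\,p_{\mathcal{A}}(\varphi,\sigma)$, so summing and using $\sum_{\sigma\in U_0}p_{\mathcal{A}}(\varphi,\sigma)\leq 1$ bounds the excess on $U_0$ by $\frac{\eta+6\varepsilon}{4}$. Hence the excess on $U_1$ is at least $\frac{\eta}{2}-\frac{\eta+6\varepsilon}{4} = \frac{\eta-6\varepsilon}{4}$, and since $p_{\mathcal{A}}(\varphi,\sigma)\geq 0$ we get $p_{\mathcal{G}}(\varphi,U_1)\geq \frac{\eta-6\varepsilon}{4}$. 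Multiplying yields $p_{\mathcal{G}}(\varphi,U_1)\cdot p_{\mathcal{A}}(\varphi,D)\geq \frac{\eta(\eta-6\varepsilon)}{8}$.

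This is essentially bookkeeping, and I do not expect a serious obstacle; the only points demanding care are (i) that the ``$D$-side'' and ``$U$-side'' halves of the $\ell_1$ distance really are equal (so that each is $\geq\eta/2$, not just their sum $\geq\eta$), and (ii) that the excess on $U_0$ is strictly below $\eta/2$, which is exactly where the standing hypothesis $\eta>6\varepsilon$ must be invoked to keep $\frac{\eta-6\varepsilon}{4}$ nonnegative. If the relevant distributions are taken over $R_{\varphi_{\downarrow S}}$ rather than $R_\varphi$, the entire computation goes through verbatim with $R_\varphi$ replaced by $R_{\varphi_{\downarrow S}}$ everywhere.
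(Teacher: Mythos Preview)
Your proposal is correct and follows essentially the same approach as the paper's own proof: factor by independence, use the equality of the $D$-side and $U$-side halves of the $\ell_1$ distance to get $p_{\mathcal{A}}(\varphi,D)\geq \eta/2$, bound the excess on $U_0$ by $\frac{\eta+6\varepsilon}{4}$ via its defining inequality and $\sum_{\sigma\in U_0}p_{\mathcal{A}}(\varphi,\sigma)\leq 1$, subtract to get $p_{\mathcal{G}}(\varphi,U_1)\geq \frac{\eta-6\varepsilon}{4}$, and multiply. The care points you flag (equality of the two halves, and the role of $\eta>6\varepsilon$) are exactly the ones the paper relies on.
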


	And now using Lemmas~\ref{lem:rounds} and \ref{lem:probofaccept} we can complete the proof of soundness.
	The probability that {\barbarik} returns {\reject} in the $i$th iteration of the for loop is
	
	\begin{align}
	&\Pr\left[R_i \mid \bigwedge_{j \in [i-1]}\overline{R_j} \right]  \nonumber \\
	= & \Pr\left[ R_i \mid  (\bigwedge_{j \in [i-1]}\overline{R_j})\wedge (\Gamma_1[i] \in U_1 \wedge \Gamma_2[i] \in D)\right]
	\cdot \Pr[\Gamma_1[i] \in U_1 \wedge \Gamma_2[i] \in D]\nonumber\\
	\geq &   \left( \frac{4}{5}\right)\frac{\eta(\eta-6\varepsilon)}{8}\label{line:false_reject_in_one_round} \quad [\text{From Lemma~\ref{lem:rounds} and Lemma~\ref{lem:probofaccept}}]
	\end{align}
	The probability of {\barbarik} returning {\reject}  in any iteration (event ${R}$) is given by
	\begin{align*}
	\Pr\left[\cup_i R_i\right]&= 1- \prod_{i\in[t]}\Pr\left[\overline{R_i} \mid \bigwedge_{j \in [i-1]}\overline{R_j}\right]\\
	&\geq 1-  \prod_{i \in [t]}\left(1-\frac{\eta(\eta-6\varepsilon)}{10}\right) \quad [\text{Using Equation (\ref{line:false_reject_in_one_round})}]\\
	&\geq 1- \left(1-\frac{\eta(\eta-6\varepsilon)}{10}\right)^t\\
	\text{Substituting $t$,\quad}&\geq 1-\delta
	\end{align*}
\end{proof}


Now to complete the proof of Lemma~\ref{thm:sound}  we have to prove the Lemma~\ref{lem:probofaccept} and  Lemma~\ref{lem:rounds}. They are presented next. 

	

\probofaccept*

	\begin{proof}{(of Lemma~\ref{lem:probofaccept})}
	Let us assume $\Gamma_1[i] \in U_1$ and  $\Gamma_2[i]\in D$. That is, we have $p_\mathcal{G}(\varphi, S, \Gamma_2[i]) \leq p_\mathcal{A}(\varphi, S, \Gamma_2[i])$ and
	$p_\mathcal{G}(\varphi, S, \Gamma_1[i] ) > \left(1 + \frac{\eta + 6\varepsilon}{4}\right)p_\mathcal{A}(\varphi, S, \Gamma_1[i] )$. It follows that
	\begin{align}
	\frac{p_\mathcal{G}(\varphi, S, \Gamma_1[i])}{p_\mathcal{G}(\varphi, S, \Gamma_2[i])} & \geq \left(1 + \frac{6\varepsilon+\eta}{4}\right)\cdot \frac{p_\mathcal{A}({\varphi}, S, \Gamma_1[i] )}{p_\mathcal{A}({\varphi}, S, \Gamma_2[i])}\label{eq:sound2ndLemma}
	\end{align}
	Since $\forall x>0, a/b>x \implies  a/(a+b)>x/(x+1) $, we have from Equation~\ref{eq:sound2ndLemma}
	\begin{align*}
	& \frac{p_\mathcal{G}(\varphi, S, \Gamma_1[i] )}{p_\mathcal{G}(\varphi, S, \Gamma_2[i]) + p_\mathcal{G}(\varphi, S, \Gamma_1[i])} \\
	\geq & \left(1 + \frac{6\varepsilon+\eta}{4}\right)\cdot \frac{p_\mathcal{A}({\varphi}, S, \Gamma_1[i] )}{p_\mathcal{A}({\varphi}, S, \Gamma_2[i])} \cdot \left(1+\left(1 + \frac{6\varepsilon+\eta}{4}\right)\cdot \frac{p_\mathcal{A}({\varphi}, S, \Gamma_1[i] )}{p_\mathcal{A}({\varphi}, S, \Gamma_2[i])}\right)^{-1}
	\end{align*}
	Thus we have
	\begin{align}
	& \E[\mathbbm{1}(\Gamma_3[j]_{\downarrow S} = \sigma_1)]  =  p_{\mathcal{G}}(\hat{\varphi}, S, \Gamma_1[i]) \nonumber \\
	= & \frac{p_\mathcal{G}(\varphi, S, \Gamma_1[i] )}{p_\mathcal{G}(\varphi, S, \Gamma_2[i]) + p_\mathcal{G}(\varphi, S, \Gamma_1[i])}\nonumber \quad [\mbox{ by the subquery consistent sampler assumption}]\\
	\geq & \left(1 + \frac{6\varepsilon+\eta}{4}\right)\cdot \frac{p_\mathcal{A}({\varphi}, S, \Gamma_1[i] )}{p_\mathcal{A}({\varphi}, S, \Gamma_2[i])} \cdot \left(1+\left(1 + \frac{6\varepsilon+\eta}{4}\right)\cdot \frac{p_\mathcal{A}({\varphi}, S, \Gamma_1[i] )}{p_\mathcal{A}({\varphi}, S, \Gamma_2[i])}\right)^{-1}\nonumber \\
	= & H \quad [\mbox{By definition of $H$}]  \label{eq:H}
	\end{align}	
	{\barbarik} returns {\reject} in the $i$th iteration if the $Bias$ (in the $i$th iteration) is more than $T$, where $T = \frac{L + H}{2}$ with $$H = \frac{(1+\frac{6\varepsilon+\eta}{4})p_{\mathcal{A}}(\varphi, S, \sigma_1) }{(1+\frac{6\varepsilon+\eta}{4})p_{\mathcal{A}}(\varphi, S, \sigma_1) + p_{\mathcal{A}}(\varphi, S, \sigma_2)}$$
	And since, by definition, all the elements in $\Gamma_1$, $\Gamma_2$ and $\Gamma_3$ are obtained by drawing independent samples from $\mathcal{D}_{\mathcal{G}(\varphi)}$, $\mathcal{D}_{\mathcal{A}(\varphi)}$ and  $\mathcal{D}_{\mathcal{G}(\hat{\varphi})}$ respectively so
	\begin{align*}
	&\Pr\left[R_i\mid (\bigwedge_{j \in [i-1]} \overline{R_j})\bigwedge (\Gamma_1[i]\in U_1 \wedge \Gamma_2[i]\in D)\right] \\
	= & \Pr\left[\mbox{ $Bias > T$ in the $i$th iteration}\mid (\Gamma_1[i]\in U_1 \wedge \Gamma_2[i]\in D) \right] \\
	= & \Pr\left[\sum_{j \in [N]}\frac{\mathbbm{1}(\Gamma_3[j]_{\downarrow S} = \sigma_1)}{N} \geq T \mid (\Gamma_1[i]\in U_1 \wedge \Gamma_2[i]\in D) \right]
	\end{align*}
	Now since  $\mathbbm{1}(\Gamma_3[j]_{\downarrow S} = \sigma_1)$ are i.i.d 0-1 random variables and since $\Gamma_1[i]\in U_1$ and $\Gamma_2[i]\in D$ implies $\E[\mathbbm{1}(\Gamma_3[j]_{\downarrow S} = \sigma_1)] \geq H$ (from Equation~\ref{eq:H}) by applying Chernoff bound from Lemma \ref{lem:relaxed_Chernoff}  we have: %
	\begin{align*}
	\Pr\left[\frac{1}{N}\sum_{j \in [N]}\mathbbm{1}(\Gamma_3[j]_{\downarrow S} = \sigma_1) \geq T \right]  &\leq exp\left(-\frac{(H-T)^2N}{8H}\right) \\
	\text{by the choice of N \quad }&\leq \frac{\delta}{t}\\
	\text{since $\delta < 0.5$ and $t\geq 3$}  \quad  &\leq 1/5
	\end{align*}
\end{proof}



\rounds*

	\begin{proof}{of Lemma~\ref{lem:rounds})}
	Since the sampler $\mathcal{G}$ is $\varepsilon$-far from the ideal sampler on input $\varphi$ so, the $\ell_1$ distance between $\mathcal{D}_{\mathcal{G}}(\varphi)$ and  $\mathcal{D}_{\mathcal{A}}(\varphi)$ is at least $\eta$. By the definition of sets $U$ and $D$ we have,
	\begin{align}
	\sum_{x\in {U }} (p_{\mathcal{G}}(\varphi, x) - p_{\mathcal{A}}(\varphi, x)) = \sum_{x\in {D}}(p_{\mathcal{A}}(\varphi, x) - p_{\mathcal{G}}(\varphi, x))& \geq \frac{\eta}{2} \label{line:high1_weight}
	\end{align}
	
	Now by definition of $U_0$, we have
	\begin{align}
	\sum_{x\in U_0} (p_{\mathcal{G}}(\varphi, x) -p_{\mathcal{A}}(\varphi, x)) < \frac{\eta + 6\varepsilon}{4}\sum_{x\in U_0} p_{\mathcal{A}}(\varphi, x) < \frac{\eta+6\varepsilon}{4} \label{line:high0_weight}
	\end{align} As $U= U_0 \cup U_1$,
	\begin{align}
	&\sum_{x\in U_1}\left(p_{\mathcal{G}}(\varphi, x) - p_{\mathcal{A}}(\varphi, x)\right) \nonumber \\
	= &\sum_{x\in U}\left(p_{\mathcal{G}}(\varphi, x) - p_{\mathcal{A}}(\varphi, x)\right)-\sum_{x\in U_0}\left(p_{\mathcal{G}}(\varphi, x) - p_{\mathcal{A}}(\varphi, x)\right)   \label{line:union_of_high} \end{align}
	Substituting Equation (\ref{line:high0_weight}) and Equation (\ref{line:high1_weight}) in Equation (\ref{line:union_of_high}) we get:-
	\begin{align*}
	\sum_{x\in U_1}\left(p_{\mathcal{G}}(\varphi, x) - p_{\mathcal{A}}(\varphi, x)\right) &\geq \frac{\eta}{2} - \frac{\eta+6\varepsilon}{4} =\frac{\eta-6\varepsilon}{4} \\\text{Therefore,}
	\sum_{x\in U_1}p_{\mathcal{G}}(\varphi, x) &\geq \frac{\eta-6\varepsilon}{4}
	\end{align*}	
	Thus we have,
	\begin{align}
	\Pr\left[ \Gamma_1[i] \in U_1\right] = \sum_{x\in U_{1}}p_{\mathcal{G}}(\varphi, x)  &\geq \frac{\eta-6\varepsilon}{4}
	\label{line:prob_high}
	\end{align}
	From Equation (\ref{line:high1_weight}) we know that,
	\begin{align}
	\Pr\left[\Gamma_2[i] \in D\right] = \sum_{x\in D} p_{\mathcal{A}}(\varphi, x)  &\geq \frac{\eta}{2}
	\label{line:prob_low}
	\end{align}
	Since $\Gamma_1[i] \in U_1$ and $\Gamma_2[i] \in D$ are independent events, putting together Equation  (\ref{line:prob_high}) and Equation (\ref{line:prob_low}), we see that
	\begin{align*}
	\Pr\left[ \Gamma_1[i] \in U_1 \wedge \Gamma_2[i] \in D  \right]  \geq \frac{\eta(\eta-6\varepsilon)}{8}
	\end{align*}
\end{proof}

\subsection{Proof of Lemma~\ref{thm:query}}\label{sec:query}
\query*
	\begin{proof}
	From Algorithm \ref{algo:theory_new}, line \ref{line:trials}, we see that the number of trials is:
	\begin{align*}
	t &= \frac{ln(1/\delta)}{ln\left(\frac{10}{10-\eta(\eta-6\varepsilon)}\right)}\\
	\text{($ln(x) \leq x-1$)}\quad  t&\leq ln(1/\delta) \frac{10}{(\eta(\eta- 6\varepsilon))}
	\end{align*}
	In every iteration we calculate a value $N$ according to the expression:
	\begin{align}
	N &= 8ln\left(\frac{t}{\delta}\right)\cdot \frac{\alpha \cdot hi}{1+ \alpha\cdot  hi} \cdot \left(\frac{\alpha \cdot hi}{1+\alpha \cdot hi}-\frac{\alpha \cdot lo}{1+\alpha \cdot lo}\right)^{-2}\nonumber \\
	&= 8ln\left(\frac{t}{\delta}\right)\cdot  \left(\frac{1}{hi-lo}\right)^2\cdot  hi \cdot \frac{1+\alpha \cdot hi}{\alpha}\cdot(1+\alpha \cdot lo)^2\nonumber\\
	\text{ ($1 < lo < hi < 2$)\quad}&<8ln\left(\frac{t}{\delta}\right)\cdot  \left(\frac{1}{hi-lo}\right)^2\cdot  2\cdot \frac{1+\alpha \cdot 2}{\alpha}\cdot(1+\alpha \cdot 2)^2\nonumber	
	\end{align}
	On Line (\ref{line:alpha}) in Algorithm \ref{algo:theory_new} we define:
	\begin{align*}
	\alpha &= \frac{\mathtt{wt}(\sigma_1)}{\mathtt{wt}( \sigma_2)}\\
	\text{(Definition \ref{defn:tilt})}\quad tilt(\mathtt{wt},\varphi)&=
	\underset{\sigma_1,\sigma_2 \in R_\varphi}{\max}\; \frac{\mathtt{wt}(\sigma_1)}{\mathtt{wt}(\sigma_2)}
	\end{align*}
	Thus, $\alpha \leq tilt(\mathtt{wt},\varphi)$. Substituting the values of $\alpha, lo$ and $hi$, we get:
	\begin{align}
	N < 8ln\left(\frac{t}{\delta}\right)\cdot \left(\frac{tilt(\mathtt{wt},\varphi)}{\eta-6\varepsilon}\right)^2 \nonumber
	\end{align}  \\
	The maximum number of samples drawn after $t$ trials is:
	\begin{align}
	2t+tN &< 2tN  \nonumber \\
	\text{(Substituting for t,$N$)\quad \quad} &< 8ln\left(\frac{1}{\delta}\cdot \frac{10\cdot ln(1/\delta)}{\eta(\eta - 6\varepsilon)} \right)\times  \frac{10\cdot ln(1/\delta)}{\eta(\eta - 6\varepsilon)}\times\frac{tilt(\mathtt{wt},\varphi)^2}{(\eta-6\varepsilon)^2} \nonumber\\
	&=\Tilde O\left(\frac{tilt(\mathtt{wt},\varphi)^2}{\eta(\eta-6\varepsilon)^3}\right)\nonumber
	\end{align}
\end{proof}

\section{Log-Linear Distributions and Inverse Transform Sampling}\label{sec:weighted_dist}

Log-linear models capture wide class of distributions of interest including those arising from graphical models, conditional random fields, skip-gram models~\cite{Mur12}. Formally, for $\sigma \in \{0,1\}^n$, we define 
\begin{align*}
 \Pr[\sigma | \theta] \propto e^{\theta \cdot \sigma}  
\end{align*}
Following Chavira and Darwiche~\cite{CD08},  we describe the following equivalent representation, called literal-weighted functions, of log-linear models. 
\begin{definition}[Literal-Weighted Functions]\label{def:literal_weighted}
	For a CNF formula $\varphi$ and set $S\subseteq Supp(\varphi)$, a weight function $\mathtt{wt}: \{0,1\}^{|S|}\to (0,1)$ is called a literal-weighted function if there is a map $\mathtt{W}: S \to (0,1)$ such that for any assignment $\sigma \in R_{\varphi_{\downarrow{S}}}$
	\begin{align*}\wt{\sigma} = \prod_{x \in \sigma}
	\begin{cases}
	\mathtt{W}(x) &if \quad x = 1\\
	1-\mathtt{W}(x) &if \quad x =0
	\end{cases}
	\end{align*}
	
	In this case we call $\mathtt{wt}$ a literal-weighted function w.r.t. $\mathtt{W}$. And note that we have $\Pr[ \sigma ] \propto \wt{\sigma}$.  
\end{definition}

We now discuss the standard technique of inverse transform sampling for completeness. For completeness, we follow the description due to Chakrborty et al~\cite{CFMV15}. 
\begin{restatable}[]{lemma}{equivalence}\label{thm:replacing_samplers}
	For any $\varepsilon$-close uniform sampler $\mathcal{V}$, any CNF formula $\varphi$ with support $S$  and a literal-weighted function $\mathsf{wt}: \{0,1\}^{|S|} \rightarrow (0,1)$, we can construct a $\hat\varphi$ s.t.
	\begin{align*}
	\forall_{\sigma \in R_{\varphi}},
	\frac{(1-\varepsilon)\mathtt{wt}(\sigma)}{\sum_{\sigma' \in R_{\varphi}}\mathtt{wt}(\sigma')} \leq p_\mathcal{V}(\hat\varphi,S,\sigma) \leq \frac{(1+\varepsilon)\mathtt{wt}(\sigma)}{\sum_{\sigma' \in R_{\varphi}}\mathtt{wt}(\sigma')}
	\end{align*}
\end{restatable}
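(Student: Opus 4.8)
The plan is to prove Lemma~\ref{thm:replacing_samplers} by the standard inverse-transform-sampling construction that attaches a chain formula to each literal in $S$, so that the uniform sampler on the augmented formula $\hat\varphi$ induces (up to the multiplicative $\varepsilon$ error it already carries) the desired literal-weighted distribution on $R_\varphi$.

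\textbf{Construction.} First I would put all the weights over a common denominator. Since $\mathtt{wt}$ is literal-weighted w.r.t.\ $\mathtt{W}$, write each $\mathtt{W}(x)$ and $1-\mathtt{W}(x)$ as a rational number $p_x/2^r$, $q_x/2^r$ with a common bit-length $r$ (rounding to a suitable precision — I will come back to precision in the last paragraph). For each variable $x\in S$ introduce $r$ fresh variables and, using the chain-formula construction of~\cite{CFMV15}, build two chain formulas $\psi^1_x$ with $p_x$ satisfying assignments and $\psi^0_x$ with $q_x$ satisfying assignments, each on those same $r$ fresh variables, gated by the literal: conjoin $(x\rightarrow \psi^1_x)\wedge(\lnot x\rightarrow \psi^0_x)$. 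Let $\hat\varphi$ be $\varphi$ conjoined with all these gadgets. By Lemma~2 (the chain-formula lemma) each gadget is a CNF of size polynomial in $r$, so $\hat\varphi$ is a CNF of size polynomial in $|\varphi|$ and the description of $\mathtt{wt}$.

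\textbf{Counting.} The key combinatorial step is to count, for a fixed $\sigma\in R_\varphi$, the number of satisfying assignments of $\hat\varphi$ that project to $\sigma$ on $S$. Conditioned on $\sigma$, each gadget for $x$ contributes independently a number of extensions equal to $p_x$ if $\sigma(x)=1$ and $q_x$ if $\sigma(x)=0$; hence the number of extensions of $\sigma$ is $\prod_{x\in\sigma}\big(p_x^{[\sigma(x)=1]}q_x^{[\sigma(x)=0]}\big) = 2^{r|S|}\mathtt{wt}(\sigma)$ (up to the rounding). Summing over $\sigma\in R_\varphi$ gives $|R_{\hat\varphi}| = 2^{r|S|}\sum_{\sigma'}\mathtt{wt}(\sigma')$. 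Therefore the \emph{exact} uniform distribution on $R_{\hat\varphi}$, projected to $S$, assigns $\sigma$ probability exactly $\mathtt{wt}(\sigma)/\sum_{\sigma'}\mathtt{wt}(\sigma')$. Finally, since $\mathcal V$ is $\varepsilon$-close to the uniform sampler, $p_{\mathcal V}(\hat\varphi,S,\sigma)$ lies within a $(1\pm\varepsilon)$ factor of this value (using that $p_{\mathcal V}(\hat\varphi,S,\sigma) = \sum_{\text{extensions }\tau\text{ of }\sigma} p_{\mathcal V}(\hat\varphi,\tau)$ and applying $\varepsilon$-closeness termwise), which is exactly the claimed bound.

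\textbf{Main obstacle.} The only real subtlety is the rounding: arbitrary weights in $(0,1)$ need not be dyadic rationals, so the counts above are only approximate unless one first rounds $\mathtt{W}(x)$ to precision $2^{-r}$. The fix is to choose $r$ large enough (polynomial in the input size and in $\log(1/\varepsilon)$ and $|S|$) that the accumulated relative error over the $|S|$ factors is absorbed into the $\varepsilon$ slack — i.e.\ round so each factor is off by at most a $(1\pm\varepsilon')$ factor with $(1+\varepsilon')^{|S|}\le 1+\varepsilon'' $ for a small enough $\varepsilon''$, and combine $\varepsilon''$ with the sampler's own $\varepsilon$. One should also note that the gadgets must use \emph{fresh} variables disjoint across different $x$ and disjoint from $Supp(\varphi)$ so the independence-across-gadgets counting is valid, and that $k<2^r$ is needed for the chain-formula lemma, which is automatic since $p_x,q_x<2^r$. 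I would present the clean dyadic case first and then remark that the general case follows by this rounding argument; everything else is the routine product-counting calculation sketched above, which I would not grind through in full.
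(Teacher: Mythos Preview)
Your proposal is correct and follows essentially the same inverse-transform-sampling route as the paper: attach a chain-formula gadget to each variable in $S$, count extensions to see that the projected uniform distribution on $\hat\varphi$ is exactly the literal-weighted distribution on $R_\varphi$, then apply $\varepsilon$-closeness termwise over the extensions.

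There are two minor cosmetic differences worth noting. First, the paper uses a single chain formula per variable with an iff gate, $(x_i\leftrightarrow\varphi_{k_i,m_i})$, so that setting $x_i=0$ forces the \emph{negation} of the chain formula and contributes $2^{m_i}-k_i$ extensions; your two-implication gadget $(x\rightarrow\psi^1_x)\wedge(\lnot x\rightarrow\psi^0_x)$ on shared fresh variables achieves the same count with two explicit formulas. Second, the paper silently assumes the literal weights are dyadic (it writes $\mathtt{W}(x_i)=k_i/2^{m_i}$ without comment), whereas you add a rounding paragraph for the general case. That extra care is reasonable, though strictly speaking it yields $(1\pm(\varepsilon+\varepsilon''))$ rather than the stated $(1\pm\varepsilon)$; for the purposes of the lemma as used in the paper this is immaterial, and your clean dyadic case matches the paper's proof exactly.
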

\begin{proof}
	Let $S_i= \{x_{i,1},\cdots, x_{i,m_i}\}$ be a set of $m_i$ ``fresh” variables (i.e. variables that were not used before) for each $x_i\in S$. 	Given any integer $m_i>0$ and a positive odd number $k_i< 2^{m_i}$, we construct  $\varphi_{k_i, m_i}(x_{i,1}, \cdots x_{i,m_i})$ using the chain formula construction in~\cite{CFMV15} such that $|R_{\varphi_{k_i,m_i}}| = k$. For notational clarity, we simply write $\varphi_{k_i, m_i}$ when the arguments of the chain formula are clear from context. For each variable $x_i \in S$, such that $\mathtt{W}(x_i^1)=\frac{k_i}{2^{m_i}}$, and $\mathtt{W}(x_i^0)=1-\mathtt{W}(x_i)$, let $(x_i \leftrightarrow\varphi_{k_i, m_i})$ be the representative clause. Thus let $\varphi^{CNF}=\bigwedge_{i \in S}(x_i \leftrightarrow\varphi_{k_i,m_i})$. We then define the formula $\hat\varphi$ as follows:
	\begin{align*}
	\hat\varphi = \varphi \wedge \varphi^{CNF}
	\end{align*}
	We can see that model count of the formula $|R_{\hat\varphi}|$ can be given by:
	\begin{align}
	|R_{\hat\varphi}| &= \sum_{\hat\sigma\in R_{\hat\varphi}}1=\sum_{\sigma\in R_{\varphi}} \sum_{(\hat\sigma \in R_{\hat\varphi}:\hat\sigma_{\downarrow{S}}=\sigma)}1
	\label{line:model_count_of_chain_formula}
	\end{align}
	Since the representative formula of every variable uses a fresh set of variables, we have from the structure of $\hat\varphi$ that if $\sigma$ is a witness of $\varphi$ then:
	\begin{align}\label{line:product_expansion}
	\sum_{(\hat\sigma \in R_{\hat\varphi}:\hat\sigma_{\downarrow{S}}=\sigma)}1 = \prod_{i \in \sigma^0 }(2^{m_i}-k_i) \prod_{i \in \sigma^1}k_i
	\end{align}
	For any $\sigma \in R_{\varphi}$:
	\begin{align} \nonumber
	p_\mathcal{U}(\hat\varphi, S, \sigma) &= \sum_{(\hat\sigma \in R_{\hat\varphi}:\hat\sigma_{\downarrow{S}}=\sigma)}p_\mathcal{U}(\hat\varphi, \hat{S}, \hat\sigma) \\ \nonumber
	&= \sum_{(\hat\sigma \in R_{\hat\varphi}:\hat\sigma_{\downarrow{S}}=\sigma)}\frac{1}{|R_{\hat\varphi}|}\\ \nonumber
	&=\frac{\sum_{(\hat\sigma \in R_{\hat\varphi}:\hat\sigma_{\downarrow{S}} = \sigma)}1}{\sum_{\sigma'\in R_{\varphi}} \sum_{(\hat\sigma \in R_{\hat\varphi}:\hat\sigma_{\downarrow{S}}= \sigma')}1} \quad  \text{Using (\ref{line:model_count_of_chain_formula})}\\ \nonumber
	&=\frac{\prod_{i \in \sigma^0 }(2^{m_i}-k_i) \prod_{i \in \sigma^1}k_i}{\sum_{\sigma' \in R_{\varphi}}\prod_{i \in \sigma'^0 }(2^{m_i}-k_i) \prod_{i \in \sigma'^1}k_i} \quad \text{Using (\ref{line:product_expansion})}\\ \nonumber
	&=\frac{\prod_{i \in \sigma^0 }(2^{m_i}-k_i) \prod_{i \in \sigma^1}k_i}{\prod_{i\in S} 2^{m_i}} \cdot \frac{\prod_{i\in S} 2^{m_i}}{\sum_{\sigma'\in R_{\varphi}}\prod_{i \in \sigma'^0 }(2^{m_i}-k_i) \prod_{i \in \sigma'^1}k_i}\\ \nonumber
	&=\frac{\prod_{i\in S}\mathtt{W}(\sigma_{\downarrow{x_i}})}{\sum_{\sigma'\in R_{\varphi}}\prod_{i \in S}\mathtt{W}(\sigma'_{\downarrow{x_i}})}\\
	&=\frac{\mathtt{wt}(\sigma)}{\sum_{\sigma' \in R_{\varphi}}\mathtt{wt}(\sigma')}\label{line:equality}
	\end{align}	
	From the definition of $\varepsilon$-additive closeness (Def. \ref{def:closeness_and_farness}) we have:
	\begin{align*}
	(1-\varepsilon)p_\mathcal{U}(\varphi,S, \sigma )\leq p_\mathcal{V}(\varphi,S,\sigma) \leq (1+\varepsilon)p_\mathcal{U}(\varphi,S,\sigma)
	\end{align*}
	Substituing into \ref{line:equality}, we get:
	\begin{align*}
	\forall_{\sigma \in R_{\varphi}},
	\frac{(1-\varepsilon)\mathtt{wt}(\sigma)}{\sum_{\sigma' \in R_{\varphi}}\mathtt{wt}(\sigma')} \leq p_\mathcal{V}(\hat\varphi,S,\sigma) \leq \frac{(1+\varepsilon)\mathtt{wt}(\sigma)}{\sum_{\sigma' \in R_{\varphi}}\mathtt{wt}(\sigma')}
	\end{align*}
\end{proof}

\begin{remark}
  It is worth noting that Lemma~\ref{thm:replacing_samplers} implies that if $\mathcal{V}$ is $\varepsilon$-close uniform sampler $\mathcal{V}$ then it can be used as a blackbox to obtain a
  $\varepsilon$-close to an ideal sampler w.r.t any literal-weighted function $\mathsf{wt}$. 
  It should also be noted that  Lemma~\ref{thm:replacing_samplers} does not imply that if $\mathcal{V}$ is $\eta$-far from a uniform sampler, then the new sampler (obtained using the above
  transformation) is also far from the ideal sampler w.r.t $\mathsf{wt}$. 
  Therefore, to test whether $p_\mathcal{V}(\hat\varphi,S,\sigma)$ is close to ideal sampler, one can not rely on merely testing uniformity of $\mathcal{V}$. 
\end{remark}

\section{Extended Tables of Results}\label{sec:detailedresults}

\subsection{Comparing sample complexity.}
``A''(``R'') represent {\barbarik} returning {\accept}({\reject}). ``DNS'' is used against those instances on which the indicated sampler Did Not Sample. ``-'' indicates that {\barbarik} timed out on that particular instance on the indicated sampler. Note that ``DNS'' is different from ``-'' as ``DNS'' indicates the failure of the underlying sampler to sample the initial set of samples, while ``-'' indicates the failure of {\barbarik} to finish within the timeout period. The timeout was set to 50,000 seconds for {\wSearchTreeSampler} and {\wQuickSampler}, while for {\wWeightGen} it was 24 hours.
\begin{longtable}[ht]
{l>{\centering}m{0.15\linewidth}>{\centering\arraybackslash}m{0.16\linewidth}>{\centering\arraybackslash}m{0.16\linewidth}>{\centering\arraybackslash}m{0.16\linewidth}}
\caption{The Extended Table}\\ 
  \toprule  && \multicolumn{3}{c}{\barbarik}  \\ \cmidrule(l){3-5} 
 Benchmark&\shortstack{$tilt$\\ (maxSamp)}&\shortstack{\wWeightGen\\ (samples)}&\shortstack{\wSearchTreeSampler\\ (samples)}&\shortstack{\wQuickSampler\\ (samples)}\\
	\midrule 
 \endfirsthead 
	\hline
\multicolumn{5}{r@{}}{continued \ldots}\\ 
\endfoot
\hline
\endlastfoot
 \toprule
  && \multicolumn{3}{c}{\barbarik} \\ \cmidrule(l){3-5}	Benchmark&\shortstack{$tilt$\\ (maxSamp)}&\shortstack{\wWeightGen\\ (samples)}&\shortstack{\wSearchTreeSampler\\ (samples)}&\shortstack{\wQuickSampler\\ (samples)}\\ 
\midrule\endhead107.sk\_3\_90&\shortstack{1\\ (2e+05)}&DNS&\shortstack{R\\ (5146)}&\shortstack{R\\ (6009)}\\ \midrule 
tableBasedAddition.sk&\shortstack{1\\ (2e+05)}&DNS&\shortstack{R\\ (6009)}&\shortstack{R\\ (24534)}\\ \midrule 
55.sk\_3\_46&\shortstack{1\\ (2e+05)}&DNS&\shortstack{R\\ (8911)}&\shortstack{R\\ (4354)}\\ \midrule 
111.sk\_2\_36&\shortstack{1\\ (2e+05)}&DNS&\shortstack{R\\ (23543)}&\shortstack{R\\ (5150)}\\ \midrule 
17.sk\_3\_45&\shortstack{1\\ (2e+05)}&DNS&\shortstack{R\\ (1e+05)}&\shortstack{R\\ (4677)}\\ \midrule 
80.sk\_2\_48&\shortstack{1\\ (2e+05)}&DNS&\shortstack{R\\ (4284)}&\shortstack{R\\ (4627)}\\ \midrule 
27.sk\_3\_32&\shortstack{1\\ (2e+05)}&\shortstack{A\\ (1e+05)}&\shortstack{R\\ (25329)}&\shortstack{R\\ (6009)}\\ \midrule 
70.sk\_3\_40&\shortstack{1\\ (2e+05)}&DNS&\shortstack{R\\ (10402)}&\shortstack{R\\ (17704)}\\ \midrule 
32.sk\_4\_38&\shortstack{1\\ (2e+05)}&\shortstack{A\\ (1e+05)}&\shortstack{R\\ (18081)}&\shortstack{R\\ (14682)}\\ \midrule 
84.sk\_4\_77&\shortstack{1\\ (2e+05)}&DNS&\shortstack{R\\ (5146)}&\shortstack{R\\ (4354)}\\ \midrule 
53.sk\_4\_32&\shortstack{1\\ (2e+05)}&\shortstack{A\\ (1e+05)}&\shortstack{R\\ (35618)}&\shortstack{R\\ (6009)}\\ \midrule 
s35932\_3\_2&\shortstack{3\\ (6e+05)}&DNS&TO&\shortstack{R\\ (11756)}\\ \midrule 
s35932\_7\_4&\shortstack{3\\ (6e+05)}&DNS&TO&\shortstack{R\\ (11756)}\\ \midrule 
s832a\_3\_2&\shortstack{3\\ (6e+05)}&\shortstack{A\\ (1e+05)}&\shortstack{R\\ (8708)}&\shortstack{R\\ (54138)}\\ \midrule 
109.sk\_4\_36&\shortstack{8\\ (3e+06)}&DNS&\shortstack{R\\ (26218)}&\shortstack{R\\ (6009)}\\ \midrule 
77.sk\_3\_44&\shortstack{11\\ (5e+06)}&DNS&\shortstack{R\\ (47582)}&\shortstack{R\\ (47907)}\\ \midrule 
s35932\_15\_7&\shortstack{12\\ (6e+06)}&DNS&TO&\shortstack{R\\ (4354)}\\ \midrule 
s832a\_7\_4&\shortstack{15\\ (8e+06)}&\shortstack{A\\ (1e+05)}&\shortstack{R\\ (4393)}&\shortstack{R\\ (13350)}\\ \midrule 
51.sk\_4\_38&\shortstack{18\\ (1e+07)}&\shortstack{A\\ (78661)}&\shortstack{R\\ (4284)}&\shortstack{R\\ (4627)}\\ \midrule 
29.sk\_3\_45&\shortstack{26\\ (2e+07)}&DNS&\shortstack{R\\ (4284)}&\shortstack{R\\ (55989)}\\ \midrule 
81.sk\_5\_51&\shortstack{27\\ (3e+07)}&DNS&\shortstack{R\\ (28409)}&\shortstack{A\\ (2e+05)}\\ \midrule 
s349\_3\_2&\shortstack{28\\ (3e+07)}&\shortstack{A\\ (1e+05)}&\shortstack{A\\ (1e+05)}&\shortstack{R\\ (22854)}\\ \midrule 
s298\_3\_2&\shortstack{32\\ (3e+07)}&\shortstack{A\\ (1e+05)}&\shortstack{R\\ (80883)}&\shortstack{R\\ (26491)}\\ \midrule 
s820a\_3\_2&\shortstack{37\\ (5e+07)}&\shortstack{A\\ (96212)}&\shortstack{R\\ (87997)}&\shortstack{A\\ (2e+05)}\\ \midrule 
s298\_15\_7&\shortstack{44\\ (6e+07)}&\shortstack{A\\ (1e+05)}&\shortstack{R\\ (42520)}&\shortstack{R\\ (53107)}\\ \midrule 
63.sk\_3\_64&\shortstack{58\\ (1e+08)}&DNS&\shortstack{R\\ (4393)}&\shortstack{R\\ (4677)}\\ \midrule 
s820a\_15\_7&\shortstack{79\\ (2e+08)}&\shortstack{A\\ (84310)}&\shortstack{R\\ (2e+05)}&\shortstack{R\\ (16714)}\\ \midrule 
s1488\_15\_7&\shortstack{110\\ (4e+08)}&\shortstack{A\\ (86152)}&\shortstack{R\\ (17168)}&\shortstack{R\\ (7341)}\\ \midrule 
s1488\_3\_2&\shortstack{132\\ (6e+08)}&\shortstack{A\\ (89686)}&\shortstack{A\\ (89236)}&\shortstack{R\\ (7341)}\\ \midrule 
s382\_15\_7&\shortstack{138\\ (6e+08)}&\shortstack{A\\ (92159)}&\shortstack{R\\ (2e+05)}&\shortstack{R\\ (6009)}\\ \midrule 
UserServiceImpl.sk\_8\_32&\shortstack{140\\ (6e+08)}&\shortstack{A\\ (1e+05)}&\shortstack{R\\ (1e+05)}&\shortstack{R\\ (4393)}\\ \midrule 
20.sk\_1\_51&\shortstack{144\\ (7e+08)}&DNS&\shortstack{R\\ (30895)}&\shortstack{R\\ (5146)}\\ \midrule 
s820a\_7\_4&\shortstack{167\\ (9e+08)}&\shortstack{A\\ (95566)}&\shortstack{A\\ (1e+05)}&\shortstack{R\\ (6009)}\\ \midrule 
s832a\_15\_7&\shortstack{194\\ (1e+09)}&\shortstack{A\\ (96984)}&\shortstack{R\\ (9434)}&\shortstack{R\\ (13350)}\\ \midrule 
s1488\_7\_4&\shortstack{206\\ (1e+09)}&\shortstack{A\\ (1e+05)}&\shortstack{R\\ (4677)}&\shortstack{R\\ (4627)}\\ \midrule 
s344\_15\_7&\shortstack{218\\ (2e+09)}&\shortstack{A\\ (90183)}&\shortstack{R\\ (94481)}&\shortstack{R\\ (4354)}\\ \midrule 
LoginService2.sk\_23\_36&\shortstack{232\\ (2e+09)}&\shortstack{A\\ (1e+05)}&\shortstack{R\\ (38044)}&\shortstack{R\\ (13350)}\\ \midrule 
s420\_new1\_15\_7&\shortstack{265\\ (2e+09)}&DNS&\shortstack{R\\ (19224)}&\shortstack{A\\ (3e+05)}\\ \midrule 
s349\_15\_7&\shortstack{412\\ (5e+09)}&\shortstack{A\\ (99215)}&\shortstack{R\\ (28400)}&\shortstack{R\\ (14682)}\\ \midrule 
s444\_15\_7&\shortstack{501\\ (8e+09)}&\shortstack{A\\ (1e+05)}&\shortstack{A\\ (1e+05)}&\shortstack{R\\ (26627)}\\ \midrule 
s349\_7\_4&\shortstack{603\\ (1e+10)}&\shortstack{A\\ (75555)}&\shortstack{R\\ (4284)}&\shortstack{R\\ (5150)}\\ \midrule 
s444\_7\_4&\shortstack{644\\ (1e+10)}&DNS&\shortstack{R\\ (4393)}&\shortstack{R\\ (4354)}\\ \midrule 
s420\_new1\_7\_4&\shortstack{982\\ (3e+10)}&\shortstack{A\\ (1e+05)}&\shortstack{R\\ (4354)}&\shortstack{R\\ (18473)}\\ \midrule 
s298\_7\_4&\shortstack{986\\ (3e+10)}&\shortstack{A\\ (83681)}&\shortstack{R\\ (8638)}&\shortstack{R\\ (6009)}\\ \midrule 
s420\_new1\_3\_2&\shortstack{1226\\ (5e+10)}&DNS&\shortstack{A\\ (1e+05)}&\shortstack{R\\ (5150)}\\ \midrule 
s382\_7\_4&\shortstack{1283\\ (5e+10)}&\shortstack{A\\ (92307)}&\shortstack{R\\ (26491)}&\shortstack{R\\ (7341)}\\ \midrule 
s420\_3\_2&\shortstack{1552\\ (8e+10)}&\shortstack{A\\ (1e+05)}&\shortstack{R\\ (14756)}&\shortstack{R\\ (48983)}\\ \midrule 
s1238a\_7\_4&\shortstack{1856\\ (1e+11)}&\shortstack{A\\ (95095)}&\shortstack{R\\ (5150)}&\shortstack{R\\ (7341)}\\ \midrule 
s1238a\_3\_2&\shortstack{1965\\ (1e+11)}&\shortstack{A\\ (1e+05)}&\shortstack{R\\ (28848)}&\shortstack{R\\ (4627)}\\ \midrule 
s444\_3\_2&\shortstack{2028\\ (1e+11)}&\shortstack{A\\ (1e+05)}&\shortstack{R\\ (2e+05)}&\shortstack{R\\ (9500)}\\ \midrule 
s1238a\_15\_7&\shortstack{2317\\ (2e+11)}&DNS&\shortstack{R\\ (9020)}&\shortstack{R\\ (88233)}\\ \midrule 
s420\_new\_15\_7&\shortstack{2317\\ (2e+11)}&\shortstack{A\\ (99198)}&\shortstack{R\\ (1e+05)}&\shortstack{R\\ (4393)}\\ \midrule 
30.sk\_5\_76&\shortstack{2453\\ (2e+11)}&DNS&\shortstack{R\\ (5216)}&\shortstack{R\\ (4677)}\\ \midrule 
s344\_7\_4&\shortstack{2607\\ (2e+11)}&\shortstack{A\\ (1e+05)}&\shortstack{R\\ (14170)}&\shortstack{R\\ (16818)}\\ \midrule 
s344\_3\_2&\shortstack{3300\\ (3e+11)}&\shortstack{A\\ (1e+05)}&\shortstack{R\\ (59952)}&\shortstack{R\\ (5150)}\\ \midrule 
s420\_new\_7\_4&\shortstack{3549\\ (4e+11)}&\shortstack{A\\ (82312)}&\shortstack{A\\ (96659)}&\shortstack{R\\ (49955)}\\ \midrule 
s953a\_7\_4&\shortstack{8984\\ (3e+12)}&DNS&\shortstack{A\\ (2e+05)}&\shortstack{R\\ (4627)}\\ \midrule 
s953a\_15\_7&\shortstack{10596\\ (4e+12)}&DNS&\shortstack{R\\ (11734)}&\shortstack{R\\ (59735)}\\ \midrule 
10.sk\_1\_46&\shortstack{15268\\ (7e+12)}&DNS&\shortstack{R\\ (35179)}&\shortstack{R\\ (1e+05)}\\ \midrule 
s420\_new\_3\_2&\shortstack{17449\\ (1e+13)}&\shortstack{A\\ (1e+05)}&\shortstack{R\\ (44937)}&\shortstack{R\\ (5150)}\\ \midrule 
19.sk\_3\_48&\shortstack{18253\\ (1e+13)}&DNS&\shortstack{R\\ (59014)}&\shortstack{R\\ (4627)}\\ \midrule 
s953a\_3\_2&\shortstack{20860\\ (1e+13)}&DNS&\shortstack{R\\ (51161)}&\shortstack{R\\ (1e+05)}\\ \midrule 
s641\_3\_2&\shortstack{1e+06\\(5e+16)}&DNS&\shortstack{R\\ (14454)}&\shortstack{R\\ (4627)}\\ \midrule 
ProjectService3.sk\_12\_55&\shortstack{5e+06\\(7e+17)}&DNS&\shortstack{R\\ (9020)}&\shortstack{R\\ (4393)}\\ \midrule 
71.sk\_3\_65&\shortstack{1e+07\\(3e+18)}&DNS&\shortstack{R\\ (1e+05)}&\shortstack{R\\ (4284)}\\ \midrule 
s838\_7\_4&\shortstack{1e+07\\(5e+18)}&DNS&\shortstack{R\\ (4393)}&\shortstack{R\\ (4284)}\\ \midrule 
s838\_15\_7&\shortstack{3e+07\\(3e+19)}&DNS&\shortstack{R\\ (5150)}&\shortstack{R\\ (4393)}\\ \midrule 
s713\_3\_2&\shortstack{6e+07\\(1e+20)}&DNS&\shortstack{R\\ (56386)}&\shortstack{R\\ (5827)}\\ \midrule 
s713\_7\_4&\shortstack{6e+07\\(1e+20)}&DNS&\shortstack{R\\ (5827)}&\shortstack{R\\ (37419)}\\ \midrule 
s641\_7\_4&\shortstack{9e+07\\(3e+20)}&DNS&\shortstack{R\\ (8747)}&\shortstack{A\\ (1e+06)}\\ \midrule 
s838\_3\_2&\shortstack{2e+08\\(1e+21)}&DNS&\shortstack{R\\ (9504)}&\shortstack{R\\ (4627)}\\ \midrule 
54.sk\_12\_97&\shortstack{4e+11\\(6e+27)}&DNS&\shortstack{R\\ (14012)}&\shortstack{R\\ (4627)}\\ \midrule 
\end{longtable}

\newpage
\subsection{Comparing  the runtime performance of {\barbarik} against the baseline approach}
In each of the following tables we compare the runtime of {\barbarik} against the runtime of the baseline approach. The runtime of {\barbarik} on {\reject} instances depends on which iteration the tester terminated on. The runtime of the baseline is extrapolated from the expected number of samples and the average sampling rate of the sampler. To do this we use the $\ell_1$-testing algorithm given in~\cite{batu}. In the context of this paper, the algorithm assumes black box sample access to a uniform sampler over the models of a Boolean formula $\varphi$, and the sampler under test, and requires  $O(\#\varphi^{2/3}(\eta-\varepsilon)^{-8/3}\log(\#\varphi/\delta))$ samples, where $\#\varphi$ is the model count, $(\varepsilon, \eta)$ are the closenes and farness parameters, and  $\delta$ is the confidence parameter. 
\subsubsection{Comparision with baseline for {\wSearchTreeSampler}}
\begin{center}
 \begin{longtable}{l>{\centering}m{0.2\linewidth}>{\centering\arraybackslash}m{0.2\linewidth}>{\centering\arraybackslash}m{0.2\linewidth}>{\centering\arraybackslash}m{0.16\linewidth}} 	\caption{Extended table comparing the baseline tester for {\wSearchTreeSampler} with {\barbarik}} \\ 
  \toprule 
 Benchmark & Baseline &\barbarik(s) & Speedup \\ \hline  \endhead 
 s349\_7\_4&{16457.21}&5&3428.58  \\ \hline 
s420\_new1\_7\_4&{5.4E+6}&6&8.6E+5  \\ \hline 
s298\_7\_4&{705.13}&8&94.02  \\ \hline 
s444\_7\_4&{1.1E+7}&8&1.3E+6  \\ \hline 
s832a\_7\_4&{3725.35}&10&372.53  \\ \hline 
s1488\_7\_4&{184.99}&12&15.16  \\ \hline 
s344\_7\_4&{24751.45}&15&1683.77  \\ \hline 
s420\_3\_2&{2.2E+6}&17&1.3E+5  \\ \hline 
s1238a\_7\_4&{1.4E+6}&20&66538.64  \\ \hline 
s832a\_3\_2&{2149.58}&22&98.60  \\ \hline 
s832a\_15\_7&{15121.66}&24&622.29  \\ \hline 
s838\_15\_7&{2.9E+13}&27&1.1E+12  \\ \hline 
s349\_15\_7&{16457.21}&28&587.76  \\ \hline 
s838\_7\_4&{3.7E+13}&29&1.3E+12  \\ \hline 
s382\_7\_4&{14915.27}&32&469.03  \\ \hline 
s298\_15\_7&{384.62}&32&12.09  \\ \hline 
s420\_new1\_15\_7&{4.1E+6}&33&1.3E+5  \\ \hline 
27.sk\_3\_32&{79531.43}&34&2346.06  \\ \hline 
s1238a\_15\_7&{1.8E+6}&37&49906.05  \\ \hline 
111.sk\_2\_36&{2.9E+8}&42&6.8E+6  \\ \hline 
51.sk\_4\_38&{2.0E+6}&44&45904.52  \\ \hline 
80.sk\_2\_48&{6.0E+7}&46&1.3E+6  \\ \hline 
s1488\_15\_7&{128.69}&48&2.67  \\ \hline 
s953a\_15\_7&{1.1E+9}&49&2.2E+7  \\ \hline 
s344\_3\_2&{15750.92}&51&309.45  \\ \hline 
s298\_3\_2&{229.07}&52&4.42  \\ \hline 
s838\_3\_2&{2.7E+13}&57&4.8E+11  \\ \hline 
s420\_new\_3\_2&{2.9E+6}&65&44288.35  \\ \hline 
84.sk\_4\_77&{3.4E+13}&68&5.0E+11  \\ \hline 
s641\_3\_2&{4.1E+10}&70&5.9E+8  \\ \hline 
55.sk\_3\_46&{2.0E+7}&70&2.9E+5  \\ \hline 
s349\_3\_2&{30563.39}&73&416.96  \\ \hline 
107.sk\_3\_90&{1.7E+15}&86&1.9E+13  \\ \hline 
s1238a\_3\_2&{2.2E+6}&87&25824.41  \\ \hline 
s344\_15\_7&{24751.45}&91&271.10  \\ \hline 
32.sk\_4\_38&{5.8E+5}&94&6228.23  \\ \hline 
10.sk\_1\_46&{6.5E+7}&112&5.8E+5  \\ \hline 
29.sk\_3\_45&{2.2E+8}&150&1.5E+6  \\ \hline 
s420\_new\_7\_4&{4.1E+6}&152&27272.30  \\ \hline 
s1488\_3\_2&{52.52}&163&0.32  \\ \hline 
s953a\_3\_2&{6.4E+8}&165&3.9E+6  \\ \hline 
s420\_new\_15\_7&{4.5E+6}&186&24014.34  \\ \hline 
70.sk\_3\_40&{2.9E+6}&201&14544.89  \\ \hline 
s444\_15\_7&{13470.05}&202&66.82  \\ \hline 
s420\_new1\_3\_2&{2.6E+6}&211&12084.36  \\ \hline 
s820a\_3\_2&{2189.81}&221&9.91  \\ \hline 
s444\_3\_2&{11186.45}&247&45.22  \\ \hline 
s713\_3\_2&{8.8E+10}&255&3.5E+8  \\ \hline 
109.sk\_4\_36&{6.6E+5}&269&2459.36  \\ \hline 
s820a\_7\_4&{4240.22}&277&15.33  \\ \hline 
63.sk\_3\_64&{5.8E+11}&282&2.1E+9  \\ \hline 
s641\_7\_4&{8.2E+10}&311&2.6E+8  \\ \hline 
53.sk\_4\_32&{55060.22}&313&176.08  \\ \hline 
s382\_15\_7&{33182.79}&343&96.86  \\ \hline 
s820a\_15\_7&{4154.77}&370&11.23  \\ \hline 
ProjectService3.sk\_12\_55&{1.3E+10}&458&2.9E+7  \\ \hline 
s35932\_3\_2&3.6E+2&TO &-  \\ \hline 
s35932\_7\_4&3.6E+2&TO &-  \\ \hline 
s35932\_15\_7&3.6E+2&TO &-  \\ \hline 
s953a\_7\_4&{5.7E+8}&689&8.3E+5  \\ \hline 
UserServiceImpl.sk\_8\_32&{479.33}&720&0.67  \\ \hline 
30.sk\_5\_76&{7.0E+14}&1116&6.2E+11  \\ \hline 
77.sk\_3\_44&{5.3E+6}&1687&3156.66  \\ \hline 
tableBasedAddition.sk\_240\_1024&{3.8E+14}&1832&2.1E+11  \\ \hline 
81.sk\_5\_51&{5.0E+9}&2099&2.4E+6  \\ \hline 
LoginService2.sk\_23\_36&{12951.33}&2368&5.47  \\ \hline 
20.sk\_1\_51&{1.1E+10}&2568&4.1E+6  \\ \hline 
19.sk\_3\_48&{3.1E+8}&2760&1.1E+5  \\ \hline 
17.sk\_3\_45&{4.5E+7}&3016&14948.13  \\ \hline 
71.sk\_3\_65&{4.7E+12}&4365&1.1E+9  \\ \hline 
54.sk\_12\_97&{2.7E+18}&4688&5.8E+14  \\ \hline 
\bottomrule 
 \end{longtable} 
 \end{center} 

\newpage
\subsection{{\wQuickSampler}}
\begin{center}
 \begin{longtable}{l>{\centering}m{0.2\linewidth}>{\centering\arraybackslash}m{0.2\linewidth}>{\centering\arraybackslash}m{0.2\linewidth}>{\centering\arraybackslash}m{0.16\linewidth}} 	\caption{Extended table comparing the baseline tester for {\wQuickSampler} with {\barbarik}}\\ 
  \toprule 
 Benchmark &Baseline &\barbarik(s) & Speedup \\ \hline  \endhead 
 s344\_3\_2&{24751.45}&3&8534.98 \\ \hline 
s344\_15\_7&{24751.45}&4&7071.84 \\ \hline 
s349\_7\_4&{28212.36}&4&7624.96 \\ \hline 
s298\_7\_4&{512.82}&4&119.26 \\ \hline 
s420\_new1\_3\_2&{5.1E+6}&4&1.2E+6 \\ \hline 
s420\_new\_3\_2&{2.2E+6}&4&5.1E+5 \\ \hline 
s420\_new\_15\_7&{3.5E+6}&4&7.8E+5 \\ \hline 
s382\_7\_4&{12429.39}&5&2589.46 \\ \hline 
s444\_7\_4&{51980.83}&5&10192.32 \\ \hline 
s820a\_7\_4&{2283.19}&5&430.79 \\ \hline 
s1488\_7\_4&{128.07}&6&20.99 \\ \hline 
s444\_3\_2&{8700.57}&6&1359.46 \\ \hline 
s838\_7\_4&{1.3E+13}&7&1.8E+12 \\ \hline 
27.sk\_3\_32&{48942.42}&7&6797.56 \\ \hline 
s1238a\_3\_2&{1.6E+6}&7&2.2E+5 \\ \hline 
s953a\_7\_4&{6.6E+8}&8&8.8E+7 \\ \hline 
s1488\_3\_2&{65.65}&8&8.31 \\ \hline 
s838\_3\_2&{1.9E+13}&8&2.4E+12 \\ \hline 
s1488\_15\_7&{60.56}&9&6.80 \\ \hline 
s349\_15\_7&{35265.44}&9&3833.20 \\ \hline 
s344\_7\_4&{22501.32}&9&2393.76 \\ \hline 
s349\_3\_2&{14106.18}&10&1424.87 \\ \hline 
55.sk\_3\_46&{4.5E+7}&10&4.3E+6 \\ \hline 
s1238a\_7\_4&{1.1E+6}&11&97431.59 \\ \hline 
s298\_3\_2&{534.49}&11&46.89 \\ \hline 
s832a\_7\_4&{4139.28}&12&344.94 \\ \hline 
111.sk\_2\_36&{5.2E+5}&12&41613.34 \\ \hline 
s838\_15\_7&{2.6E+13}&12&2.1E+12 \\ \hline 
s420\_new1\_7\_4&{2.2E+6}&13&1.7E+5 \\ \hline 
s832a\_15\_7&{13861.52}&14&1011.79 \\ \hline 
UserServiceImpl.sk\_8\_32&{326.81}&14&23.68 \\ \hline 
s382\_15\_7&{27149.56}&15&1859.56 \\ \hline 
53.sk\_4\_32&{91767.04}&16&5595.55 \\ \hline 
s820a\_15\_7&{5665.59}&17&335.24 \\ \hline 
84.sk\_4\_77&{2.1E+13}&18&1.2E+12 \\ \hline 
51.sk\_4\_38&{1.8E+6}&19&91363.08 \\ \hline 
s444\_15\_7&{14817.06}&19&763.77 \\ \hline 
109.sk\_4\_36&{6.6E+5}&20&33425.00 \\ \hline 
107.sk\_3\_90&{1.6E+15}&21&7.4E+13 \\ \hline 
71.sk\_3\_65&{1.3E+12}&27&5.0E+10 \\ \hline 
s641\_3\_2&{2.8E+10}&28&1.0E+9 \\ \hline 
s298\_15\_7&{1153.85}&30&38.98 \\ \hline 
32.sk\_4\_38&{1.2E+6}&34&36689.91 \\ \hline 
s420\_3\_2&{4.5E+6}&34&1.3E+5 \\ \hline 
s420\_new\_7\_4&{3.5E+6}&36&96896.05 \\ \hline 
80.sk\_2\_48&{2.1E+8}&37&5.7E+6 \\ \hline 
s832a\_3\_2&{2149.58}&45&47.66 \\ \hline 
19.sk\_3\_48&{4.5E+8}&50&9.0E+6 \\ \hline 
63.sk\_3\_64&{2.1E+11}&51&4.0E+9 \\ \hline 
17.sk\_3\_45&{8.3E+7}&55&1.5E+6 \\ \hline 
s713\_3\_2&{9.4E+10}&56&1.7E+9 \\ \hline 
s953a\_15\_7&{6.7E+8}&79&8.5E+6 \\ \hline 
20.sk\_1\_51&{4.0E+9}&82&4.8E+7 \\ \hline 
70.sk\_3\_40&{4.3E+6}&101&42475.10 \\ \hline 
s1238a\_15\_7&{1.0E+6}&107&9614.31 \\ \hline 
10.sk\_1\_46&{7.1E+7}&128&5.5E+5 \\ \hline 
s953a\_3\_2&{3.4E+8}&132&2.6E+6 \\ \hline 
s820a\_3\_2&{1167.90}&137&8.54 \\ \hline 
30.sk\_5\_76&{3.0E+14}&210&1.4E+12 \\ \hline 
ProjectService3.sk\_12\_55&{6.4E+9}&219&2.9E+7 \\ \hline 
LoginService2.sk\_23\_36&{12692.30}&229&55.52 \\ \hline 
s420\_new1\_15\_7&{3.2E+6}&232&13726.91 \\ \hline 
77.sk\_3\_44&{1.2E+7}&409&30125.88 \\ \hline 
29.sk\_3\_45&{1.3E+8}&658&2.0E+5 \\ \hline 
54.sk\_12\_97&{4.0E+17}&690&5.8E+14 \\ \hline 
s641\_7\_4&{6.8E+10}&1117&6.1E+7 \\ \hline 
s35932\_15\_7&{1.4E+356}&1182&1.2E+353 \\ \hline 
tableBasedAddition.sk\_240\_1024&{3.0E+13}&1430&2.1E+10 \\ \hline 
s35932\_7\_4&{1.2E+356}&2227&5.5E+352 \\ \hline 
s35932\_3\_2&{1.1E+356}&2346&4.5E+352 \\ \hline 
81.sk\_5\_51&{2.0E+9}&2461&8.3E+5 \\ \hline 
\bottomrule 
 \end{longtable} 
 \end{center}

\newpage
\subsection{{\wWeightGen}}
\begin{center}
 \begin{longtable}{l>{\centering}m{0.2\linewidth}>{\centering\arraybackslash}m{0.2\linewidth}>{\centering\arraybackslash}m{0.2\linewidth}>{\centering\arraybackslash}m{0.16\linewidth}} 	\caption{Extended table comparing the baseline tester for {\wWeightGen} with {\barbarik}}\\ 
  \toprule 
 Benchmark & Baseline &\barbarik(s) & Speedup \\ \hline  \endhead 
 s1488\_3\_2&{229.78}&6648&0.03 \\ \hline 
s298\_7\_4&{7564.11}&10758&0.70 \\ \hline 
s1488\_15\_7&{643.45}&11493&0.06 \\ \hline 
s298\_15\_7&{2948.72}&12325&0.24 \\ \hline 
s349\_7\_4&{1.8E+06}&12858&136.40 \\ \hline 
s820a\_15\_7&{48724.11}&14070&3.46 \\ \hline 
s344\_15\_7&{3.8E+05}&14074&27.18 \\ \hline 
s1488\_7\_4&{853.78}&15049&0.06 \\ \hline 
s820a\_7\_4&{42728.33}&16124&2.65 \\ \hline 
s349\_15\_7&{3.9E+05}&17690&21.80 \\ \hline 
s382\_7\_4&{9.7E+05}&21785&44.45 \\ \hline 
s349\_3\_2&{3.0E+05}&22395&13.54 \\ \hline 
s832a\_15\_7&{5.6E+05}&23036&24.45 \\ \hline 
s420\_new\_7\_4&{4.0E+09}&24092&1.7E+5 \\ \hline 
s344\_7\_4&{1.7E+06}&26423&64.55 \\ \hline 
51.sk\_4\_38&{2.7E+09}&26612&1.0E+5 \\ \hline 
s820a\_3\_2&{2.3E+05}&27408&8.47 \\ \hline 
s298\_3\_2&{2061.62}&30262&0.07 \\ \hline 
s344\_3\_2&{5.0E+05}&32378&15.29 \\ \hline 
s1238a\_7\_4&{1.5E+09}&33689&45408.69 \\ \hline 
s832a\_7\_4&{76990.55}&34315&2.24 \\ \hline 
s382\_15\_7&{1.0E+07}&39024&263.98 \\ \hline 
s1238a\_3\_2&{7.1E+08}&40406&17575.38 \\ \hline 
s420\_new\_15\_7&{4.9E+09}&40725&1.2E+5 \\ \hline 
27.sk\_3\_32&{7.4E+06}&41997&176.26 \\ \hline 
s832a\_3\_2&{74844.43}&42696&1.75 \\ \hline 
UserServiceImpl.sk\_8\_32&{21547.88}&45090&0.48 \\ \hline 
32.sk\_4\_38&{4.9E+08}&45126&10872.88 \\ \hline 
s420\_new1\_7\_4&{2.8E+08}&48911&5639.38 \\ \hline 
s444\_3\_2&{1.9E+06}&55017&34.61 \\ \hline 
LoginService2.sk\_23\_36&{1.3E+06}&56229&22.38 \\ \hline 
s420\_3\_2&{2.3E+09}&68048&33247.50 \\ \hline 
53.sk\_4\_32&{2.2E+07}&70590&312.87 \\ \hline 
s420\_new\_3\_2&{1.2E+10}&75284&1.6E+5 \\ \hline 
\bottomrule 
 \end{longtable} 
 \end{center} 

\newpage
\subsection{Number of samples required for baseline approach}
\begin{center}
 \begin{longtable}{l>{\centering\arraybackslash}m{0.16\linewidth}}   \caption{Number of samples required for baseline tester} \\ 
  \toprule 
 Benchmark  & Number of samples \\ \hline  \endhead 
 s344\_3\_2&2E+6 \\ \hline 
s344\_15\_7&2E+6 \\ \hline 
s349\_7\_4&2E+6 \\ \hline 
s298\_7\_4&6E+4 \\ \hline 
s420\_new1\_3\_2&3E+8 \\ \hline 
s420\_new\_3\_2&3E+8 \\ \hline 
s420\_new\_15\_7&3E+8 \\ \hline 
s382\_7\_4&1E+6 \\ \hline 
s444\_7\_4&4E+6 \\ \hline 
s820a\_7\_4&3E+5 \\ \hline 
s1488\_7\_4&1E+4 \\ \hline 
s444\_3\_2&1E+6 \\ \hline 
s838\_7\_4&2E+15 \\ \hline 
27.sk\_3\_32&6E+6 \\ \hline 
s1238a\_3\_2&1E+8 \\ \hline 
s953a\_7\_4&4E+10 \\ \hline 
s1488\_3\_2&7E+3 \\ \hline 
s838\_3\_2&2E+15 \\ \hline 
s1488\_15\_7&8E+3 \\ \hline 
s349\_15\_7&2E+6 \\ \hline 
s344\_7\_4&2E+6 \\ \hline 
s349\_3\_2&2E+6 \\ \hline 
55.sk\_3\_46&2E+9 \\ \hline 
s1238a\_7\_4&1E+8 \\ \hline 
s298\_3\_2&4E+4 \\ \hline 
s832a\_7\_4&4E+5 \\ \hline 
111.sk\_2\_36&3E+7 \\ \hline 
s838\_15\_7&2E+15 \\ \hline 
s420\_new1\_7\_4&3E+8 \\ \hline 
s832a\_15\_7&1E+6 \\ \hline 
UserServiceImpl.sk\_8\_32&2E+4 \\ \hline 
s382\_15\_7&3E+6 \\ \hline 
53.sk\_4\_32&6E+6 \\ \hline 
s820a\_15\_7&4E+5 \\ \hline 
84.sk\_4\_77&1E+15 \\ \hline 
51.sk\_4\_38&8E+7 \\ \hline 
s444\_15\_7&1E+6 \\ \hline 
109.sk\_4\_36&4E+7 \\ \hline 
107.sk\_3\_90&7E+16 \\ \hline 
71.sk\_3\_65&5E+13 \\ \hline 
s641\_3\_2&3E+12 \\ \hline 
s298\_15\_7&6E+4 \\ \hline 
32.sk\_4\_38&7E+7 \\ \hline 
s420\_3\_2&3E+8 \\ \hline 
s420\_new\_7\_4&3E+8 \\ \hline 
80.sk\_2\_48&6E+9 \\ \hline 
s832a\_3\_2&2E+5 \\ \hline 
19.sk\_3\_48&1E+10 \\ \hline 
63.sk\_3\_64&5E+12 \\ \hline 
17.sk\_3\_45&2E+9 \\ \hline 
s713\_3\_2&6E+12 \\ \hline 
s953a\_15\_7&5E+10 \\ \hline 
20.sk\_1\_51&7E+10 \\ \hline 
70.sk\_3\_40&2E+8 \\ \hline 
s1238a\_15\_7&1E+8 \\ \hline 
10.sk\_1\_46&6E+9 \\ \hline 
s953a\_3\_2&4E+10 \\ \hline 
s820a\_3\_2&1E+5 \\ \hline 
30.sk\_5\_76&2E+15 \\ \hline 
ProjectService3.sk\_12\_55&2E+11 \\ \hline 
LoginService2.sk\_23\_36&1E+5 \\ \hline 
s420\_new1\_15\_7&3E+8 \\ \hline 
77.sk\_3\_44&3E+8 \\ \hline 
29.sk\_3\_45&3E+9 \\ \hline 
54.sk\_12\_97&4E+18 \\ \hline 
s641\_7\_4&5E+12 \\ \hline 
s35932\_15\_7&1E+357 \\ \hline 
tableBasedAddition.sk\_240\_1024&1E+15 \\ \hline 
s35932\_7\_4&1E+357 \\ \hline 
s35932\_3\_2&1E+357 \\ \hline 
81.sk\_5\_51&4E+10 \\ \hline 
\bottomrule 
 \end{longtable} 
 \end{center}

\end{document}